\documentclass[journal,final,twoside,romanappendices]{IEEEtran}
\usepackage[noadjust]{cite}
\usepackage[cmex10]{amsmath}
\usepackage{amssymb,amsfonts,amscd,amsthm,amsbsy}
\usepackage{graphicx}
\usepackage[english]{babel}
\usepackage{mathrsfs}
\usepackage{mathtools}
\usepackage{array}
\usepackage{blkarray}

\usepackage{bbm}

\usepackage{tabulary}
\usepackage{makecell}
\usepackage{xcolor}

\newtheorem{theorem}{Theorem}
\newtheorem{lemma}{Lemma}

\newtheorem{corollary}{Corollary}
\theoremstyle{definition}
\newtheorem{remark}{Remark}

\theoremstyle{definition}

\theoremstyle{definition}
\newtheorem{example}{Example}
\newtheorem{definition}{Definition}
\theoremstyle{definition}

\theoremstyle{definition}

\usepackage{etoolbox}
\AtEndEnvironment{example}{\null\hfill$\square$}%
\AtEndEnvironment{remark}{\null\hfill$\square$}%

\newcommand{\Fb}{\mathbbmss{F}}
\newcommand{\Zb}{\mathbbmss{Z}}
\newcommand{\Cs}{\mathbbmss{M}}
\newcommand{\wt}{\mathsf{wt}}
\newcommand{\rcov}{r_{\rm cov}}
\newcommand{\Xc}{\mathcal{X}}
\newcommand{\rmin}{\mathrm{In}}
\newcommand{\rmout}{\mathrm{Out}}
\newcommand{\ann}{\mathrm{Ann}}
\newcommand{\taunat}{\tau_{\mathrm{nat}}}
\newcommand{\rhoreg}{\rho^{\mathrm{reg}}}
\newcommand{\dft}{\Phi}
\newcommand{\tth}{\text{th}}



\normalsize

\title{Permute \&  Add Network Codes via Group Algebras}
\author{Lakshmi Prasad Natarajan and Smiju Kodamthuruthil Joy%
\thanks{This article was presented in part at the 2021 IEEE International Symposium on Information Theory (ISIT 2021), Melbourne, Australia.}%
\thanks{The authors are with the Department of Electrical Engineering, Indian Institute of Technology Hyderabad, India, email: \{lakshminatarajan,\,ee17resch11017\}@iith.ac.in.}%

\thanks{Copyright (c) 2022 IEEE. Personal use of this material is permitted. However, permission to use this material for any other purposes must be obtained from the IEEE by sending a request to pubs-permissions@ieee.org.}%
}


\begin{document}

\maketitle

\begin{abstract}
A class of network codes have been proposed in the literature where the symbols transmitted on network edges are binary vectors and the coding operation performed in network nodes consists of the application of (possibly several) permutations on each incoming vector and XOR-ing the results to obtain the outgoing vector. These network codes, which we will refer to as \emph{permute-and-add} network codes, involve simpler operations and are known to provide lower complexity solutions than scalar linear network codes. The complexity of these codes is determined by their \emph{degree} which is the number of permutations applied on each incoming vector to compute an outgoing vector. Constructions of permute-and-add network codes for multicast networks are known. In this paper, we provide a new framework based on group algebras to design permute-and-add network codes for arbitrary (not necessarily multicast) networks. Our framework allows the use of any finite group of permutations (including circular shifts, proposed in prior works) and admits a trade-off between coding rate and the degree of the code. Further, our technique permits elegant recovery and generalizations of the key results on permute-and-add network codes known in the literature. 
\end{abstract}

\begin{IEEEkeywords}
Circular shifts, group algebra, network coding, permutations.
\end{IEEEkeywords}



\section{Introduction} \label{sec:intro}


\IEEEPARstart{N}{etwork} coding theory~\cite{ACLY_IT_00} is dominated by the study of linear network codes~\cite{LYC_IT_03,KoM_NET_03,CDFZ_IT_06,JSCEEJT_IT_05,EbF_IT_11}. In \emph{scalar linear network coding} the symbols carried by each network edge is an element of a finite field $\Fb_q$, and is obtained by computing an $\Fb_q$-linear combination of the symbols carried on its parent edges. 
It is well known that scalar linear network coding is sufficient to achieve the capacity of multicast networks as long as the size of the field $\Fb_q$ is sufficiently large~\cite{LYC_IT_03,KoM_NET_03}.
Note that a scalar linear network coding solution requires all the network nodes to perform arithmetic over a (possibly large) finite field.

An alternative to scalar linear network coding, which can simplify network coding operations, is to use vector linear network codes where the encoding kernels are linear combinations of permutation matrices~\cite{JCE_ISIT_06,XMA_ISIT07,HaK_ITW_10,HSCL_IT_16,TSLYL_IT_19,STLYL_TCOMM_19,TSYL_COMML_19}. 
For these network codes, the symbols carried by the network edges are length-$n$ binary vectors, and the coding operation performed at a network node is the application of (possibly several) permutations on each incoming binary vector and XOR-ing (which is same as addition in the field $\Fb_2$) the permuted vectors to determine the outgoing binary vector.
Using the vocabulary of~\cite{JCE_ISIT_06} (and by mildly generalizing its terminology), we will refer to such network codes as \emph{permute-and-add} network codes.
The \emph{degree} of a permute-and-add network code is the maximum number of permutations applied on each incoming binary vector to compute an outgoing vector at any node~\cite{STLYL_TCOMM_19,TSLYL_IT_19}. 
Note that the degree determines the number of XORs to be performed at each node. Since the task of performing permutations is cheap, the degree acts as a proxy for the complexity of a permute-and-add network code.
It is known that permute-and-add network codes can provide lower complexity coding operations than scalar linear network coding~\cite{JCE_ISIT_06,TSLYL_IT_19}.

 Permute-and-add network codes are attractive because of the simplicity of the operations required to implement them. A scalar linear solution over a field of size $2^n$ requires the intermediate nodes of the network to perform arithmetic in extensions of the binary field. 
This might involve either polynomial multiplication and modulo operations with respect to a degree $n$ polynomial, or the use of look-up tables for finite field multiplications when the field size is small. 
In contrast, permute-and-add network codes require only bit-wise additions over $\Fb_2$ (which are realized as XOR operations) and circular shifts (note that any permutation can be decomposed as a product of circular shifts). 
These operations can be implemented easily in software and hardware. 
As a result, permute-and-add network codes could find applications in networks with computationally limited nodes.

The permute-and-add codes of~\cite{JCE_ISIT_06} were proposed for multicast networks using a random coding framework. These codes employ degree $1$ permute-and-add operation at non-sink nodes, while the decoding matrices are dense, indicating high complexity at sink nodes. 
The prior works~\cite{XMA_ISIT07,HaK_ITW_10,HSCL_IT_16,TSLYL_IT_19,STLYL_TCOMM_19,TSYL_COMML_19} all employ only circular shifts (i.e., cyclic permutations) for coding operations, and following~\cite{TSLYL_IT_19}, we will refer to these permute-and-add network codes as \emph{circular-shift} network codes.
A deterministic circular-shift network code was proposed in~\cite{XMA_ISIT07} for combination networks in which the coding operations performed at non-sink nodes are of degree $1$. The existence of circular-shift network coding solutions for multicast networks was proved in~\cite{HaK_ITW_10}. 
Codes for repairing failed disks in distributed storage systems that make use of circular-shift network codes were proposed in~\cite{HSCL_IT_16}.
Circular-shift network codes were designed in~\cite{TSLYL_IT_19,STLYL_TCOMM_19,TSYL_COMML_19,TSWY_COMML_20} for multicast networks by lifting scalar linear network codes.
A similar code over $\Zb_{256}$ was designed in~\cite{ShH_ISIT20}.
Note that most of the prior works on permute-and-add network codes propose solutions for multicast networks only.
 
The simplicity of permute-and-add network coding solution, in comparison to scalar linear solutions over finite fields, has also been noted in earlier works, please see Section~IV-B of~[11], Section~I of~[15] and Section~IV-A of~[10].

In this paper we provide a new algebraic framework for designing permute-and-add network codes. We use the ring theoretic platform of Connelly and Zeger~\cite{CoZ_Part1_IT_18,CoZ_Part2_IT_18} and show that permute-and-add network codes can be obtained from linear network codes over ideals of group algebras.
Unlike previous works, our technique applies to arbitrary directed acyclic multigraphs (which are not necessarily multicast networks), and both the encoding as well as the decoding procedures of our network codes employ permute-and-add operations.
Further, our framework admits the use of any finite group of permutations (including circular shifts) and allows the designer to trade-off the rate of the network code to achieve a smaller degree. 
The generality of our technique permits us to recover and generalize some of the key results from~\cite{HaK_ITW_10,TSLYL_IT_19,STLYL_TCOMM_19}.

In Section~\ref{sec:network_model} we review the model of linear network coding over rings and present our main observation that underlies the design of permute-and-add network codes of small degree. In Section~\ref{sec:network_coding_over_group_codes} we show that permute-and-add network codes can be obtained as linear network codes over group codes (which are ideals of group algebras). The existence of permute-and-add network coding solutions is dealt with in Section~\ref{sec:existence_of_solutions} where we discuss network codes arising from semi-simple Abelian group algebras (Section~\ref{sub:sec:semi-simple}), circular-shift network codes (Section~\ref{sub:sec:circular-shift}), and network codes designed using non-cyclic permutation groups (Section~\ref{sub:sec:non-cyclic-abelian}).
Our results in Sections~\ref{sec:network_coding_over_group_codes} and~\ref{sec:existence_of_solutions} are on permute-and-add network codes over the binary field. In Section~\ref{sec:arbit_finite_fields} we highlight how our main techniques generalize to codes over arbitrary finite fields.

\emph{Notation:} For integers $a,b$, the symbol $(a,b)$ denotes their gcd. Unless otherwise specified, all vectors are column vectors.

\section{Network Model and The Main Principle} \label{sec:network_model}

We consider linear network coding over a module $M$. 
We assume that the ring $R$ underlying the module $M$ contains a multiplicative identity $1$ and is not necessarily commutative. Recall that a \emph{left $R$-module $M$} is an Abelian group $(M,+)$ together with a mapping that sends $R \times M \to M$ such that
\begin{enumerate}
\item[\emph{(i)}] $(r_1 + r_2)m = r_1m + r_2m$
\item[\emph{(ii)}] $r_1(r_2m) = (r_1r_2)m$
\item[\emph{(iii)}] $r(m_1 + m_2) = rm_1 + rm_2$
\item[\emph{(iv)}] $1\,m = m$
\end{enumerate} 
for all $r,r_1,r_2 \in R$ and $m,m_1,m_2 \in M$. 
For brevity, we sometimes refer to $M$ simply as $R$-module (instead of left $R$-module). In order to emphasize the underlying ring, we sometimes denote the module as $\prescript{}{R}{M}$.
We will always assume that the size of the module $|M|$ is finite.

\subsection{Network Model} 

We will assume that the network is a directed acyclic multigraph with finitely many nodes and edges. 

We associate with the network code a finite set $\Xc$, called the \emph{edge alphabet}, and an injective map \mbox{$\tau: M \to \Xc$}. Necessarily, \mbox{$|\Xc| \geq |M|$}. We will assume that each edge in the network is a noiseless communication link that carries one symbol from the edge alphabet $\Xc$. The function $\tau$ and its inverse \mbox{$\tau^{-1}: \tau(M) \to M$} are used to represent elements of the module $M$ as symbols from $\Xc$ and vice versa.
The linear coding operations performed at the nodes are over $M$, while the alphabet used for communicating along the network edges is $\Xc$. 

A message is an information-bearing random variable taking values in $M$. We assume that there are $s$ messages $Z_1,\dots,Z_s$ generated in the network.
Sources are network nodes where the messages are generated, and sinks are the nodes where one or more messages are demanded. 
Each message is demanded by a subset of sink nodes.
The set of incoming edges at a node $v$ will be denoted as $\rmin(v)$ and the set of outgoing edges of $v$ is $\rmout(v)$. 
%
Without loss of generality, we assume the following: 
\emph{(i)}~there are exactly $s$ source nodes and there is a one-to-one correspondence between the messages $Z_1,\dots,Z_s$ and the set of source nodes such that each message is generated at the corresponding unique source node;  
\emph{(ii)}~source nodes have no incoming edges; 
and \emph{(iii)}~if $v$ is a source node and $Z_i \in M$ is the message generated at $v$, then every outgoing edge of $v$ carries $\tau(Z_i) \in \Xc$.
%

We will use $X_e \in \Xc$ to denote the symbol carried along the edge $e$.
A coding coefficient \mbox{$k^{d,e} \in R$} is assigned to each pair $(d,e)$ of \emph{adjacent edges}, i.e., if there exists a node $v$ such that \mbox{$d \in \rmin(v)$} and \mbox{$e \in \rmout(v)$}. For every non-source node $v$ and edge $e \in \rmout(v)$, the symbol carried in $e$ is generated by the linear operation
\begin{equation} \label{eq:coding_operation}
\textstyle X_e = \tau \left( \sum_{d \in \rmin(v)} k^{d,e} \, \tau^{-1}(X_d)\right).
\end{equation} 
Similarly, a sink node $v$ demanding a message $Z_i$ uses a linear operation
\begin{equation} \label{eq:decoding_operation}
\textstyle \sum_{d \in \rmin(v)} k^{d,i} \, \tau^{-1}(X_d)
\end{equation} 
to decode $Z_i$, where the coefficients $k^{d,i} \in R$.

\begin{remark}
Except for the notion of the edge alphabet, the network coding model used in this paper is identical to the ring theoretic framework of Connelly and Zeger~\cite{CoZ_Part1_IT_18,CoZ_Part2_IT_18}. That is, when $\Xc=M$ and $\tau$ is the identity map, our model is identical to~\cite{CoZ_Part1_IT_18,CoZ_Part2_IT_18}.
\end{remark}
A \emph{linear network code} over the $R$-module $M$ is the collection of coding coefficients 
$\left\{k^{d,e}~|~(d,e) \text{ are adjacent}\right\}$ and  $\cup_{i}\left\{k^{d,i}~|~\exists v \text{ such that } 
d \in \rmin(v), v \text{ demands } Z_i \right\}$.

We will sometimes refer to such a code as an $\prescript{}{R}{M}$-linear network code.
For brevity, we will represent the collection of encoding and decoding coefficients simply as $\{k^{d,e}\}$ and $\{k^{d,i}\}$, respectively.

A \emph{network coding solution} over the $R$-module ${M}$ (with edge alphabet $\Xc$) is a linear network code where each sink's decoding functions recover its demands. A network code is \emph{solvable} over $\prescript{}{R}{M}$ if it has a linear solution over $\prescript{}{R}{M}$ (for some edge alphabet $\Xc$). 
A network is scalar linearly solvable over $R$ if it is solvable over the module $\prescript{}{R}{R}$.

It is clear that the solvability of a network is independent of the choice of $\Xc$ and $\tau$ as long as the map $\tau$ is injective.
However, the choice of $\Xc$ and $\tau$ determines how the coding operations are realized by the network nodes and determines the communication cost or the rate of the network code (for instance, see Remark~\ref{rem:bit_truncation}).
Since each edge carries one element of $\Xc$ and each message is an element of $M$, the \emph{rate} of the network code is 
\begin{equation*}
 \frac{\log_2 |M|}{\log_2 |\Xc|} = \log_{|\Xc|} |M|.
\end{equation*} 
This definition of rate is in the same spirit as that of fractional network codes where messages are $k$-length vectors and edge symbols are $n$-length vectors over a finite field and the rate is $k/n$~\cite{CDFZ_IT_06}.

\begin{example}
Scalar linear network codes over a finite field $\Fb_q$ have $R=M=\Xc=\Fb_q$, while $k$-dimensional vector linear network codes have $M=\Xc=\Fb_q^k$ and $R=\Fb_q^{k \times k}$ which is the ring of $k \times k$ matrices over $\Fb_q$. In both cases $M=\Xc$ and $\tau$ is the identity map.
\end{example}

\begin{example} \label{example:Shum_Hou}
Shum and Hou~\cite{ShH_ISIT20} designed an array code for distributed storage across $6$ disks such that the original data can be recovered from any $4$ disks. In the context of network coding, this is a code for the $\binom{6}{4}$ combination network~\cite{NgY_ITW04}. 
The coded data stored in the $6$ disks of the array code correspond to the network coded packets transmitted from the source to the six nodes in the first layer of the combination network.
For this code, $R$ is the commutative ring \mbox{$\Zb_{256}[x]/(x^5-1)$} and $M$ is the ideal of $R$ consisting of all \mbox{$a(x) \in \Zb_{256}[x]/(x^5-1)$} such that \mbox{$a(x) \mod (x-1)=0$}. Note that any ideal of a ring is a module over the same ring, and hence, $M$ is an $R$-module. In Shum and Hou's code, \mbox{$\Xc=\Zb_{256}^{4}$} and $\tau$ maps \mbox{$a(x) = a_0 + a_1x + a_2x^2 + a_3x^3 + a_4x^4 \in M$} to \mbox{$(a_0,a_1,a_2,a_3) \in \Xc$}. This is an injective map since \mbox{$a_4 = -(a_0 + \cdots + a_3)$}, over $\Zb_{256}$, for every $a(x) \in M$. In this case, $|M|=|\Xc|=256^4$ and the rate is $1$. 
\end{example}

\subsection{The Main Principle}

We now present a simple observation that plays a key role in our framework for designing permute-and-add network codes with small degree.

The \emph{annihilator} of a left $R$-module $M$ is the set of elements of $R$ whose action on $M$ is identical to the action of the zero element of $R$ on $M$, 
\begin{equation*}
\ann(M) = \left\{ r \in R~|~rm = 0 \text{ for all } m \in M \right\}.
\end{equation*} 
The module $\prescript{}{R}{M}$ is \emph{faithful} if $\ann(M)=\{0\}$ and \emph{unfaithful} otherwise.
We now observe that if the elements of $\ann(M)$ are used to perturb the encoding and decoding coefficients of a network coding solution, then the resulting network code is also a solution.

Let $\{k^{d,e}\}$ and $\{k^{d,i}\}$ be the encoding and decoding coefficients of a given linear network code over a module $M$. 
For each pair of adjacent edges $(d,e)$, we choose a new encoding coefficient $k^{d,e}+a^{d,e}$ where, the original coefficient $k^{d,e}\in R$ is modified by the addition of an arbitrary element $a^{d,e} \in \ann(M)$. 
We know that $(k^{d,e}+a^{d,e})\,\tau^{-1}(X_d)=k^{d,e}\,\tau^{-1}(X_d)$ since $\tau^{-1}(X_d) \in M$ and $a^{d,e} \in \ann(M)$. 
Similarly for each node $v$ demanding a message $Z_i$ and each $d \in \rmin(v)$, 
we can change the decoding coefficient $k^{d,i}$ to $k^{d,i} + a^{d,i}$, where $a^{d,i} \in \ann(M)$, without affecting the decodability at $v$.
Thus, any network coding solution $\{k^{d,e}\},\{k^{d,i}\}$ can be modified to another solution $\{k^{d,e} + a^{d,e}\}$, $\{k^{d,i} + a^{d,i}\}$.
We state this observation as Lemma~\ref{lem:annihilator}. Its formal proof is included in Appendix~\ref{app:lem:annihilator}.

\begin{lemma} \label{lem:annihilator}
Consider an $R$-linear solution to a network over a module $M$. 
The network code obtained by adding arbitrary elements of $\ann(M)$ to the coding coefficients of this network code is also a solution to this network.
\end{lemma}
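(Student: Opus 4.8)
The plan is to verify directly that perturbing the coding coefficients by elements of $\ann(M)$ leaves every symbol $X_e$ and every decoded message unchanged, and then conclude by induction over a topological ordering of the edges. First I would fix the given solution $\{k^{d,e}\},\{k^{d,i}\}$ and the perturbed code $\{\tilde k^{d,e}\} = \{k^{d,e}+a^{d,e}\}$, $\{\tilde k^{d,i}\} = \{k^{d,i}+a^{d,i}\}$ with $a^{d,e},a^{d,i}\in\ann(M)$. Write $\tilde X_e$ for the symbol carried on edge $e$ under the perturbed code. I would prove the invariant $\tilde X_e = X_e$ for every edge $e$ by induction on the position of $e$ in a topological order of the DAG (such an order exists since the multigraph is acyclic).

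For the base case, if $e$ is an outgoing edge of a source node generating message $Z_i$, then by assumption~\emph{(iii)} in the network model both codes put $\tau(Z_i)$ on $e$, so $\tilde X_e = X_e = \tau(Z_i)$. For the inductive step, let $e\in\rmout(v)$ for a non-source node $v$; every $d\in\rmin(v)$ precedes $e$ in the topological order, so $\tilde X_d = X_d$ by the inductive hypothesis, and in particular $\tau^{-1}(\tilde X_d) = \tau^{-1}(X_d)\in M$ is well-defined. Then, using $\tilde k^{d,e} = k^{d,e}+a^{d,e}$ and the module axioms, for each $d$ we have $\tilde k^{d,e}\,\tau^{-1}(\tilde X_d) = (k^{d,e}+a^{d,e})\,\tau^{-1}(X_d) = k^{d,e}\,\tau^{-1}(X_d) + a^{d,e}\,\tau^{-1}(X_d) = k^{d,e}\,\tau^{-1}(X_d)$, since $a^{d,e}\in\ann(M)$ annihilates $\tau^{-1}(X_d)\in M$. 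Summing over $d\in\rmin(v)$ and applying $\tau$ gives $\tilde X_e = X_e$ by the encoding rule~\eqref{eq:coding_operation}. This completes the induction, so the two codes induce identical edge symbols throughout the network.

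Finally, consider a sink $v$ demanding $Z_i$. By the invariant just established, $\tilde X_d = X_d$ for all $d\in\rmin(v)$, and the same annihilator computation as above shows $\sum_{d\in\rmin(v)}\tilde k^{d,i}\,\tau^{-1}(\tilde X_d) = \sum_{d\in\rmin(v)}k^{d,i}\,\tau^{-1}(X_d)$, which equals $Z_i$ because the original code is a solution. Hence the perturbed code also recovers every demanded message and is a solution. I do not anticipate a genuine obstacle here; the only point requiring care is to set up the topological induction cleanly so that $\tau^{-1}(\tilde X_d)$ is guaranteed to lie in $M$ before invoking the annihilator property, and to note that the argument at sinks reuses the identical one-line computation used at internal nodes.
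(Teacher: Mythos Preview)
Your proposal is correct and follows essentially the same approach as the paper's proof: an induction over a topological ordering (the paper orders vertices, you order edges, which is immaterial) showing that the edge symbols are unchanged because the annihilator perturbation vanishes on $\tau^{-1}(X_d)\in M$, followed by the identical computation at the sinks. Nothing further to add.
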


Lemma~\ref{lem:annihilator} implies that if $\ann(M) \neq \{0\}$ (in other words, if $M$ is an unfaithful $R$-module) then there is flexibility in the choice of encoding and decoding coefficients. This choice can be exercised in such a way that the resulting network coding operations are of low complexity. We proceed by applying this technique to \emph{group codes}~\cite{Ber_Cybernetics_67,KeS_ContemporaryMath_01} in Section~\ref{sec:network_coding_over_group_codes}.

\section{Network Coding over Group Codes} \label{sec:network_coding_over_group_codes}

We briefly review group codes, group algebras and their matrix representation in Section~\ref{sub:sec:group-algebras}, and use these tools to obtain permute-and-add network codes in Section~\ref{sub:sec:permute-and-add}.

\subsection{Review of Group Algebras and their Matrix Representation} \label{sub:sec:group-algebras}

Let $G$ be a finite group (not necessarily commutative) and $\Fb_2=\{0,1\}$ the finite field of size $2$. The group algebra $\Fb_2[G]$ is the set of all possible formal sums $\sum_{g \in G} a_g g$, where $a_g \in \Fb_2$. Addition and multiplication in $\Fb_2[G]$ are defined as 
\begin{align*} 
\sum_{g \in G}a_g g + \sum_{g \in G} b_g g = \sum_{g \in G} \left(a_g + b_g\right)g, \text{ and }
\end{align*} 
\begin{align}
 \label{eq:addition_multiplication_group_algebra}   
 \left(\sum_{g \in G}a_g g\right) \cdot \left(\sum_{g \in G} b_g g\right) = \sum_{g \in G} \left(\sum_{h \in G} a_h b_{h^{-1}\!g}\right) g.
\end{align}
The ring $\Fb_2[G]$ is commutative if and only if $G$ is Abelian.  

A \emph{group code} $\Cs$ is a left-ideal of $\Fb_2[G]$, i.e., $\Cs$ is a subgroup of $(\Fb_2[G],+)$ such that $rm \in \Cs$ for all $r \in \Fb_2[G]$ and $m \in \Cs$. Any group code $\Cs$ is a left $\Fb_2[G]$-module where the action of a ring element on $\Cs$ is the same as the product of elements in $\Fb_2[G]$.


If \mbox{$|G|=n$}, there is a natural $\Fb_2$-linear embedding $\taunat: \Fb_2[G] \to \Fb_2^n$ that maps \mbox{$m = \sum_{g \in G}m_g g$} to the column vector $(m_g)_{g \in G}$ using some fixed ordering of elements of $G$, 
\begin{equation*}
\taunat\left( \sum_{g \in G} m_g g\right) = \left(m_g\right)_{g \in G}.
\end{equation*} 
Clearly, the set of vectors $\{\taunat(1 g)~|~g \in G\}$ form the standard basis for $\Fb_2^n$, where $\taunat(1 g)$ is an $n$-length column vector that contains a $1$ in the $g^{\text{th}}$ position and zeros elsewhere.

The \emph{regular representation}~\cite{Ser_Springer_77} of $G$ in $\Fb_2^n$ maps each $g \in G$ to a permutation matrix $\rhoreg_g \in \Fb_2^{n \times n}$. If the rows and columns of $\rhoreg_g$ are indexed by the elements of $G$, the entry in the $k^{\text{th}}$ row and $h^{\text{th}}$ column of $\rhoreg_g$ is 
\begin{align} \label{eq:reg_group_rep}
\rhoreg_g(k,h) = 1 \text{ if } k = gh, \text{ and } \rhoreg_g(k,h) = 0 \text{ otherwise}.
\end{align} 
We observe that $\taunat(1g \cdot 1h) = \rhoreg_g \times \taunat(1h)$ where the multiplication on the right-hand side is a matrix-vector product. Consequently, we obtain the \emph{regular matrix representation}~\cite{jacobson2009basic} of the algebra $\Fb_2[G]$ 
\begin{align} \label{eq:reg_mat_rep}
\sum_{g \in G}r_g g \to \sum_{g \in G}r_g \rhoreg_g.
\end{align} 
This is an injective algebra homomorphism from $\Fb_2[G]$ into $\Fb_2^{n \times n}$. For any choice of $r=\sum_{g \in G} r_g g \in \Fb_2[G]$ and $m \in \Fb_2[G]$, we have
\begin{align}
\taunat(rm) &= \left( \sum_{g \in G} r_g \rhoreg_g \right) \times \taunat(m) \nonumber \\
&= \sum_{g \in G} r_g \, \left( \rhoreg_g \times \taunat(m) \right).
\label{eq:taunat_rhoreg}    
\end{align}

In particular, this is valid for any ring element $r \in \Fb_2[G]$ and module element $m \in \Cs$ since $\Cs \subset \Fb_2[G]$.

\begin{example}
To illustrate the matrix representation of a group algebra, consider the ring $\Fb_2[C_3]$ where $C_3 = \{e,\gamma,\gamma^2\}$ is the cyclic group of order $3$ and $e \in C_3$ is the identity element. 
Let $\taunat: \Fb_2[C_3] \to \Fb_2^3$ be the map $m_e e + m_{\gamma} \gamma + m_{\gamma^2} \gamma^2 \to (m_e,m_{\gamma},m_{\gamma^2})$. 
The matrix representation $\rhoreg_e$ of the identity element $e$ is the $3 \times 3$ identity matrix over $\Fb_2$ while
\begin{align*}
\rhoreg_{\gamma} = 
\begin{bmatrix}
0 & 0 & 1 \\
1 & 0 & 0 \\
0 & 1 & 0
\end{bmatrix}
\text{ and }
\rhoreg_{\gamma^2} = 
\begin{bmatrix}
0 & 1 & 0 \\
0 & 0 & 1 \\
1 & 0 & 0
\end{bmatrix}.
\end{align*} 
The matrix representation of an arbitrary ring element $r_e e + r_{\gamma} \gamma + r_{\gamma^2} \gamma^2 \in \Fb_2[C_3]$ is 
\begin{align*}
\begin{bmatrix}
r_e & r_{\gamma^2} & r_{\gamma} \\
r_{\gamma} & r_e & r_{\gamma^2} \\
r_{\gamma^2} & r_{\gamma} & r_e
\end{bmatrix}.
\end{align*} 
\end{example}

\subsection{Permute-and-Add Network Codes from Group Codes} \label{sub:sec:permute-and-add}

Consider any $\Fb_2[G]$-linear network code over $\Cs$ where $\Cs$ is a left-ideal of $\Fb_2[G]$, i.e., $\Cs$ is a group code. We will use $\Xc=\Fb_2^n$ as the edge alphabet and the embedding  $\taunat: \Cs \to \Fb_2^n$ to map module elements to the edge alphabet for communication between the nodes of the network.
The symbol $X_e$ transmitted on any network edge $e$ is a length-$n$ vector over $\Fb_2$.
As usual, let the set of encoding and decoding coefficients of this network be $\{k^{d,e}\}$ and $\{k^{d,i}\}$, respectively. We denote the expansions of these coefficients as 
\begin{align*}
k^{d,e} = \sum_{g \in G} k^{d,e}_{g} g  \text{ and } k^{d,i} = \sum_{g \in G} k^{d,i}_g g,
\end{align*} 
where $k^{d,e}_g, k^{d,i}_g \in \Fb_2$.
Using the fact that $\taunat$ is a $\Fb_2$-linear map, we observe that the coding operation~\eqref{eq:coding_operation} on a outgoing edge $e$ of a non-source node $v$ is 
\begin{align*}
X_e &= \taunat \left( \sum_{d \in \rmin(v)} k^{d,e} \, \taunat^{-1}\left(X_d\right) \right) \\
&= \sum_{d \in \rmin(v)} \taunat \left(k^{d,e} \, \taunat^{-1} \left(X_d\right) \right).
\end{align*} 
Using~\eqref{eq:taunat_rhoreg} here, we obtain
\begin{align}
X_e &= \sum_{d \in \rmin(v)} \sum_{g \in G} k^{d,e}_g \, \left(\rhoreg_g \times X_d \right) \nonumber \\
&= \sum_{d \in \rmin(v)} \sum_{\substack{g \in G:\, k^{d,e}_g = 1}} \rhoreg_g \times X_d. \label{eq:enc_permute_add}
\end{align} 
Similarly, to execute the decoding operation~\eqref{eq:decoding_operation} at a node $v$ that demands a message $Z_i$, we can retrieve $\taunat(Z_i)$ as 
\begin{align}
\taunat(Z_i) &= \taunat\left( \sum_{d \in \rmin(v)} k^{d,i} \taunat^{-1}\left(X_d\right) \right) \nonumber \\
&= \sum_{d \in \rmin(v)} \taunat\left(k^{d,i} \taunat^{-1}\left(X_d\right)\right) \nonumber \\
&= \sum_{d \in \rmin(v)} \sum_{g \in G: k^{d,i}_g=1} \rhoreg_g \times X_d. \label{eq:dec_permute_add}
\end{align} 

Thus, the encoding and decoding operations at the non-source nodes involves applying (possibly several) permutations $\rhoreg_g$ on each incoming vector $X_d$ and computing the sum of the resulting vectors. 
Hence, this is a permute-and-add network code.
These coding operations are simple since \emph{(i)}~arithmetic over large finite fields is not required, which is often the case with multicast networks; and \emph{(ii)}~the task of multiplying a vector by a permutation matrix can be effected by a mere reordering of the elements of the vector. 

We also note that this network code is a linear network code over the $\Fb_2[G]$-left module $\Cs$. The embeddings $\taunat$ and $\rhoreg$ simply allow us to realize the coding operations as sums of matrix-vector products, i.e., as a fractional linear network code~\cite{CDFZ_IT_06} over $\Fb_2$. Hence, we can use the framework of~\cite{CoZ_Part1_IT_18,CoZ_Part2_IT_18} to study the existence of network coding solutions.

\begin{remark}
The \emph{circular-shift network codes} proposed in~\cite{TSLYL_IT_19,STLYL_TCOMM_19} correspond to the case where $G$ is a cyclic group of odd order. Since $G$ is cyclic, the permutations $\rhoreg_g$, $g \in G$, used for encoding and decoding are all cyclic permutations. The odd order of the group implies that the characteristic of $\Fb_2$ does not divide $|G|$, and hence, the group algebra $\Fb_2[G]$ is semi-simple, i.e., it is isomorphic to a product of matrix rings over finite fields. This semi-simple structure can be used to determine the conditions for the existence of network coding solutions when this group algebra is used (see Section~\ref{sec:existence_of_solutions}).
\end{remark}

In the literature~\cite{TSLYL_IT_19,STLYL_TCOMM_19}, the complexity of a permute-and-add network code is measured in terms of the number of permutations applied on each incoming vector $X_d$.
Note that the number of permutations applied on $X_d$ in the encoding operation~\eqref{eq:enc_permute_add} is 
\mbox{$|\{g \in G~|~k^{d,e}_g = 1\}| = \wt(\taunat(k^{d,e}))$}
which is the Hamming weight of the vector $\taunat(k^{d,e})$. We abuse the notation mildly to denote this quantity as $\wt(k^{d,e})$.
Similarly, the number of permutations applied on $X_d$ in the decoding operation~\eqref{eq:dec_permute_add} is $\wt(k^{d,i})$.

\begin{definition}
An $\Fb_2[G]$-linear network code over a group code $\Cs$ is of \emph{degree} $\delta$ if the weight of each of its encoding and decoding coefficients is at the most $\delta$, i.e., $\wt(k^{d,e}), \wt(k^{d,i}) \leq \delta$ for all coefficients $k^{d,e}$ and $k^{d,i}$.
\end{definition}


We now apply the main principle Lemma~\ref{lem:annihilator} to modify, in a structured manner, a given network code over a group code and determine the degree of the modified network code. The annihilator $\ann(\Cs)$ of the left-module $\Cs$ is a two-sided ideal of $\Fb_2[G]$, see~\cite{Herstein_1968}. In particular, $\ann(\Cs)$ is an additive subgroup of $\Fb_2[G]$. Since, $\taunat$ is a $\Fb_2$-linear map we deduce that $\taunat(\ann(\Cs))=\{\taunat(a)~|~a \in \ann(\Cs)\}$ is a subgroup of $\Fb_2^n$, i.e., $\taunat(\ann(\Cs))$ is a binary linear code. The \emph{covering radius} of this linear code is
\begin{equation*}
\rcov = \max_{\pmb{v} \in \Fb_2^n} \,\left\{ \min_{\pmb{a} \in \taunat(\ann(\Cs))} \wt(\pmb{v} + \pmb{a}) \right\},
\end{equation*} 
which is the farthest that any of the vectors in $\Fb_2^n$ lies from $\taunat(\ann(\Cs))$. 
Thus, for any given $\pmb{v} \in \Fb_2^n$ there exists a vector in $\taunat(\ann(\Cs))$ at a distance of at the most $\rcov$. 
By abusing the notation mildly, we use $\rcov(\ann(\Cs))$ or simply $\rcov$ to denote the covering radius of $\taunat\left(\ann(\Cs)\right)$.

We modify a given network code $\{k^{d,e}\}, \{k^{d,i}\}$ as follows. For each coefficient $k^{d,e}$ we choose a vector $\pmb{a}^{d,e} \in \taunat(\ann(\Cs))$ such that $\wt\left(\taunat(k^{d,e}) + \pmb{a}^{d,e}\right) \leq \rcov$. We then use $a^{d,e} = \taunat^{-1}(\pmb{a}^{d,e})$ to modify the coefficient $k^{d,e}$ to $k^{d,e} + a^{d,e}$. Then \eqref{eq:enc_permute_add} is rewritten as
\begin{align}
X_e &=\sum_{d \in \rmin(v)} \sum_{\substack{g \in G:\\ k^{d,e}_g + a^{d,e}_g= 1}} \rhoreg_g \times X_d. \label{eq:modified_enc_permute_add}
\end{align} 
where, $\pmb{a}^{d,e}=(a^{d,e}_g)_{g \in G}\in \Fb_2^n $.

We observe that $a^{d,e} \in \ann(\Cs)$ and the weight of the modified coefficient $k^{d,e} + a^{d,e}$ is at the most $\rcov$. Therefore, from \eqref{eq:modified_enc_permute_add} it is obvious that number of permutations on any incoming vector of an intermediate node $v$ is at the most $\rcov$.
Using a similar strategy, for each decoding coefficient $k^{d,i}$ we choose $a^{d,i} \in \ann(\Cs)$ such that $\wt(k^{d,i} + a^{d,i}) \leq \rcov$ and hence, number permutations on any incoming vector of a sink node $v$ is also at the most $\rcov$. So we conclude that the modified network code is of degree $\rcov$. 

Using Lemma~\ref{lem:annihilator}, we see that any $\Fb_2[G]$-linear network code over a group code $\Cs$ can be modified to a degree-$\rcov$ network code without affecting the messages passed in any of the edges or the ability of the sinks to decode their demands. 
 In summary, suppose $\{k^{d,e}\}, \{k^{d,i}\}$ is a network code over the ideal $\Cs$. We first modify these coefficients by adding elements $a^{d,e},a^{d,i} \in \ann(\Cs)$ so that the weights of the resulting coefficients is at the most $\rcov(\ann(\Cs))$. We then obtain a vector linear network code for the network, which is a permute-and-add network code with coding coefficients $\left\{\pmb{K}^{d,e}=\sum_{g \in G: k^{d,e}_g + a^{d,e}_g = 1} \rhoreg_g\right\}$ and $\left\{\pmb{K}^{d,i}=\sum_{g \in G: k^{d,i}_g + a^{d,i}_g = 1} \rhoreg_g\right\}$. This vector linear network code is a solution to this network if and only if the original network code over $\Cs$ is a solution.
By doing so we arrive at the main result of this section.

\begin{theorem} \label{thm:degree}
Let $\Cs$ be a left-ideal in $\Fb_2[G]$. Any linear network code over $\Cs$ for a network yields a permute-and-add network code for this network with degree equal to the covering radius of $\ann(\Cs)$.
\end{theorem}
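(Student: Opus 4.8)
The plan is to unwind the definitions and show that the construction described in the paragraph immediately preceding the theorem does exactly what is claimed. Concretely, I would start from an arbitrary $\Fb_2[G]$-linear network code $\{k^{d,e}\},\{k^{d,i}\}$ over the left-ideal $\Cs$, whether or not it is a solution, and produce from it the vector linear network code over $\Fb_2$ with coefficient matrices $\pmb{K}^{d,e}$ and $\pmb{K}^{d,i}$ obtained via the embeddings $\taunat$ and $\rhoreg$. The two things to establish are: (a) this vector code is a genuine permute-and-add code of degree $\rcov(\ann(\Cs))$, and (b) the vector code solves the network if and only if the original $\Fb_2[G]$-linear code does.

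For part (a), I would invoke Lemma~\ref{lem:annihilator}: for each coefficient $k^{d,e}$, the definition of covering radius guarantees a vector $\pmb{a}^{d,e}\in\taunat(\ann(\Cs))$ with $\wt(\taunat(k^{d,e})+\pmb{a}^{d,e})\le\rcov$; setting $a^{d,e}=\taunat^{-1}(\pmb{a}^{d,e})\in\ann(\Cs)$ and replacing $k^{d,e}$ by $k^{d,e}+a^{d,e}$ leaves the transmitted symbols $X_e$ unchanged on every edge and leaves decodability untouched, by Lemma~\ref{lem:annihilator}. Doing the same for each decoding coefficient $k^{d,i}$ produces a modified $\Fb_2[G]$-linear code, each of whose coefficients has weight at most $\rcov$. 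Then I would pass through equations~\eqref{eq:enc_permute_add}--\eqref{eq:dec_permute_add}: since $\taunat$ is $\Fb_2$-linear and \eqref{eq:taunat_rhoreg} holds for every ring element acting on an element of $\Cs\subseteq\Fb_2[G]$, the encoding operation~\eqref{eq:coding_operation} over $\Cs$ becomes, after applying $\taunat$, the sum $\sum_{d}\sum_{g:\,k^{d,e}_g+a^{d,e}_g=1}\rhoreg_g\times X_d$, which is precisely a permute-and-add operation whose number of permutations on each incoming $X_d$ equals $\wt(k^{d,e}+a^{d,e})\le\rcov$; similarly for decoding via~\eqref{eq:dec_permute_add}. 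Hence the resulting vector linear code has degree $\rcov(\ann(\Cs))$.

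For part (b), the key point is that the maps $\taunat$ and $\rhoreg$ merely re-express the $\Cs$-linear computations as matrix-vector computations over $\Fb_2$ without altering any value. Because $\taunat$ is an injective $\Fb_2$-linear bijection from $\Cs$ onto the subspace $\taunat(\Cs)\subseteq\Fb_2^n$ and because~\eqref{eq:taunat_rhoreg} shows $\taunat(r m)=\big(\sum_g r_g\rhoreg_g\big)\times\taunat(m)$, the edge symbol $X_e\in\taunat(\Cs)$ computed by the vector code equals $\taunat$ applied to the module element that the $\Cs$-linear code would place on edge $e$; an easy induction on a topological ordering of the DAG (using that source nodes emit $\taunat(Z_i)$ and the modified coefficients compute the same module elements as the original ones, by Lemma~\ref{lem:annihilator}) shows this holds on every edge. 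Consequently a sink recovers $\taunat(Z_i)$ under the vector code exactly when it recovers $Z_i$ under the $\Cs$-linear code, so solvability is equivalent. Since the theorem statement only asserts the existence of the permute-and-add code and the equality of its degree with $\rcov(\ann(\Cs))$, assembling parts (a) and (b) completes the argument.

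I expect the main obstacle to be bookkeeping rather than a conceptual difficulty: one must be careful that $\taunat$ is only defined (and only $\Fb_2$-linear, not $\Fb_2[G]$-linear) as a map into $\Fb_2^n$, that $\rhoreg_g$ acts correctly on vectors in $\taunat(\Cs)$ and not merely on all of $\Fb_2^n$, and that modifying coefficients by elements of $\ann(\Cs)$ changes the matrices $\pmb{K}^{d,e},\pmb{K}^{d,i}$ but not their action on the relevant subspace $\taunat(\Cs)$. The covering-radius bound itself is immediate once the annihilator is identified as a two-sided ideal, hence an $\Fb_2$-subspace, and $\taunat(\ann(\Cs))$ as a binary linear code; the only thing to verify is that a nearest codeword always exists (it does, by finiteness) and yields weight at most $\rcov$ after the shift.
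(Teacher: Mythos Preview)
Your proposal is correct and mirrors the paper's own argument almost exactly: the paper proves Theorem~\ref{thm:degree} in the paragraphs immediately preceding its statement by perturbing each coefficient by a nearest element of $\ann(\Cs)$ (via the covering-radius definition), invoking Lemma~\ref{lem:annihilator} to see that all edge symbols and decodability are unchanged, and then passing through~\eqref{eq:enc_permute_add}--\eqref{eq:dec_permute_add} and~\eqref{eq:modified_enc_permute_add} to obtain a permute-and-add realization of degree at most $\rcov(\ann(\Cs))$. Your part~(b) is slightly more explicit than the paper about the equivalence between the $\Cs$-linear code and its vector-linear realization, but this is exactly the content of the paper's summary paragraph just before the theorem.
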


Since we use $\Fb_2^n$ as the edge alphabet, the rate of a network code over $\prescript{}{\Fb_2[G]}{\Cs}$ is $\log_2 |\Cs| / \log_2 |\Fb_2^n| = \dim(\Cs)/n$, which is the ratio of the dimension of $\Cs$ (as a vector space over $\Fb_2$) to the order of the group $G$.
If $\Cs$ is non-trivial, i.e., $\Cs \neq \Fb_2[G]$, then the rate is less than $1$. 

\begin{remark} \label{rem:trade-off}
There exists a trade-off between rate and degree. If ideals $\Cs'$ and $\Cs$ are such that \mbox{$\Cs' \subset \Cs$}, then \mbox{$\ann(\Cs') \supset \ann(\Cs)$}, and hence, $\rcov(\ann(\Cs')) \leq \rcov(\ann(\Cs))$. Thus a network code designed over a smaller ideal will achieve a smaller degree at the cost of yielding a lower rate. See Example~\ref{ex:rate-degree_tradeoff} (Section~\ref{sec:existence_of_solutions}) for an illustration.
\end{remark}

As a corollary to Theorem~\ref{thm:degree} we have the following upper bound on the degree for a wide class of network codes.

\begin{corollary} \label{cor:even_weight}
If $\Cs$ is a left-ideal of $\Fb_2[G]$ such that the weight of every element of $\Cs$ is even, then the degree of the permute-and-add network code corresponding to any linear network code over $\Cs$ is at the most $\left\lfloor \frac{n}{2}\right\rfloor$.
\end{corollary}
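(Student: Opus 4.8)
The plan is to invoke Theorem~\ref{thm:degree}, which reduces the claim to showing that the covering radius of the binary linear code $\taunat(\ann(\Cs))$ is at most $\lfloor n/2 \rfloor$. The key observation is that the even-weight hypothesis on $\Cs$ forces the all-ones vector $\pmb{1}=(1,\dots,1) \in \Fb_2^n$ to lie in $\taunat(\ann(\Cs))$, and any binary linear code that contains $\pmb{1}$ automatically has covering radius at most $\lfloor n/2\rfloor$.

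Concretely, I would first consider the element $j=\sum_{g\in G} g \in \Fb_2[G]$, whose image under $\taunat$ is $\pmb{1}$. For any $m=\sum_{h\in G} m_h h \in \Cs$, the multiplication rule~\eqref{eq:addition_multiplication_group_algebra} gives $jm = \sum_{g\in G}\bigl(\sum_{h\in G} m_{h^{-1}g}\bigr) g$, and each inner coefficient $\sum_{h\in G} m_{h^{-1}g}$ equals $\wt(m) \bmod 2$ because $h\mapsto h^{-1}g$ is a bijection of $G$. By hypothesis $\wt(m)$ is even, so $jm=0$ for every $m\in\Cs$; hence $j\in\ann(\Cs)$ and $\pmb{1}\in\taunat(\ann(\Cs))$.

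Next I would use that both $\pmb{0}$ and $\pmb{1}$ lie in the linear code $\taunat(\ann(\Cs))$: for an arbitrary $\pmb{v}\in\Fb_2^n$, at least one of $\wt(\pmb{v})$ and $\wt(\pmb{v}+\pmb{1})=n-\wt(\pmb{v})$ is at most $\lfloor n/2\rfloor$, so the distance from $\pmb{v}$ to $\taunat(\ann(\Cs))$ is at most $\lfloor n/2\rfloor$. Taking the maximum over $\pmb{v}$ yields $\rcov(\ann(\Cs))\le\lfloor n/2\rfloor$, and Theorem~\ref{thm:degree} then gives the stated bound on the degree.

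The argument is short, so I do not expect a serious obstacle; the only point requiring care is the bookkeeping in the group-algebra product — in particular, verifying that $j$ annihilates $\Cs$ on the correct (left) side, which is exactly what the definition of $\ann(\Cs)$ for the left-module $\Cs$ requires.
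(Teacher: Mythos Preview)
Your proposal is correct and follows essentially the same approach as the paper: show that the all-ones element $j=\sum_{g\in G}g$ lies in $\ann(\Cs)$ (using the even-weight hypothesis), deduce that $\taunat(\ann(\Cs))$ contains the repetition code $\{\pmb{0},\pmb{1}\}$, and conclude that its covering radius is at most $\lfloor n/2\rfloor$, whence Theorem~\ref{thm:degree} gives the result. Your version is slightly more explicit in verifying the left-annihilation and in spelling out the covering-radius argument for the repetition code, but the ideas are identical.
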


\begin{proof}
We first observe that $\sum_{h \in G}1h \in \ann(\Cs)$, since for any $\sum_{g \in G}m_g g \in \Cs$ we have
\begin{equation*}
\left(\sum_{h \in G}1h\right) \cdot \left(\sum_{g \in G}m_g\right) = \sum_{g \in G} \left(\sum_{h \in G} m_h\right) g = 0.
\end{equation*} 
Clearly, $\sum_{g \in G} 0g \in \ann(\Cs)$. Thus, $\taunat(\ann(\Cs))$ contains the all-zeros and the all-ones vectors, i.e., the repetition code is a subcode of $\taunat(\ann(\Cs))$. Since the covering radius of the repetition code is $\left\lfloor \frac{n}{2} \right\rfloor$, we conclude that $\rcov(\ann(\Cs)) \leq \left\lfloor \frac{n}{2} \right\rfloor$. This completes the proof.
\end{proof}

\begin{remark} \label{rem:bit_truncation}
A process called \emph{bit truncation} was used in conjunction with circular-shift network codes for combination networks in~\cite{XMA_ISIT07} which reduces the length of the edge alphabet vector from $n$ to $n-1$, and thereby increases the rate. A similar process can be used in our setting when $\Cs$ contains only even weight elements at the cost of a marginal increase in complexity. 
If $\Cs$ contains only even weight elements, then a module element $m \in \Cs$ carried on any edge of the network has even weight. Hence, the first $n-1$ components of the $n$-length vector $\taunat(m)$ are sufficient to determine $\taunat(m)$ since the last component can be obtained as the $\Fb_2$-sum of the first $n-1$ components. This technique uses $\Fb_2^{n-1}$ instead of $\Fb_2^n$ as the edge alphabet, and hence, increases the rate of the network code from $\dim(\Cs)/n$ to $\dim(\Cs)/(n-1)$. However, this involves additional complexity due to the effort required to compute the parity of the $(n-1)$ bits transmitted on each edge of the network.
\end{remark}

Note that Theorem~\ref{thm:degree} does not address the question of whether a linear network coding \emph{solution} over the left-ideal $\Cs$ exists. In Section~\ref{sec:existence_of_solutions} we will rely on the framework developed in~\cite{CoZ_Part2_IT_18} to discuss the existence of network coding solutions over $\Cs$.

\section{Existence of Network Coding Solutions} \label{sec:existence_of_solutions}

We will first recall some basic results related to network coding over rings~\cite{CoZ_Part2_IT_18} in Section~\ref{sub:sec:review_network_coding_over_rings}, and then use them to analyze network codes over ideals of $\Fb_2[G]$ in the rest of this section. In particular, Section~\ref{sub:sec:semi-simple} provides necessary and sufficient conditions for existence of network codes when $\Fb_2[G]$ is a commutative semi-simple ring, Section~\ref{sub:sec:circular-shift} analyzes circular-shift network codes, and Section~\ref{sub:sec:non-cyclic-abelian} presents an example where $G$ is a non-cyclic group.

\subsection{Review of Linear Network Codes over Modules} \label{sub:sec:review_network_coding_over_rings}

Lemma~I.6 of~\cite{CoZ_Part2_IT_18} is a key result that connects the question of linear solvability of a network over two different rings. 

\begin{theorem} \cite[Lemma~I.6]{CoZ_Part2_IT_18} \label{thm:CoZ_1_6}
Let the rings $R$ and $S$ be such that there exists a ring homomorphism from $R$ to $S$. If a network is solvable over some faithful $R$-module then it is solvable over every $S$-module.
\end{theorem}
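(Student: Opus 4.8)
The plan is to prove the statement in two stages: first reduce ``solvable over some faithful $R$-module'' to ``scalar linearly solvable over $R$'' (i.e.\ solvable over $\prescript{}{R}{R}$), and then transport a scalar $R$-solution across the ring homomorphism $\phi\colon R\to S$ to produce a solution over any $S$-module.

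For the first stage I would fix an $R$-linear solution over a faithful module $M$ with coefficients $\{k^{d,e}\}$, $\{k^{d,i}\}$, and work with $\Xc=M$ and $\tau=\mathrm{id}$ (the paper already observes that solvability does not depend on $\Xc,\tau$). Processing edges in topological order and using \eqref{eq:coding_operation}, each edge symbol can be written as $X_e=\sum_{j=1}^{s} m_{e,j}Z_j$, where the global kernels $m_{e,j}\in R$ are obtained from the local coefficients by finitely many additions and multiplications in $R$ (the base values on source edges being $1_R$ and $0$, and $m_{e,j}=\sum_d k^{d,e}m_{d,j}$ at non-source nodes; acyclicity makes the recursion terminate). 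The decoding \eqref{eq:decoding_operation} of $Z_i$ at its sink then evaluates to $\sum_j\big(\sum_d k^{d,i}m_{d,j}\big)Z_j$, and requiring this to equal $Z_i$ for every $(Z_1,\dots,Z_s)\in M^s$ forces $\sum_d k^{d,i}m_{d,j}-\delta_{ij}\,1_R$ to lie in $\ann(M)$, where $\delta_{ij}=1$ if $j=i$ and $0$ otherwise. Since $M$ is faithful, $\ann(M)=\{0\}$ and hence
\[
\textstyle\sum_{d} k^{d,i}\,m_{d,j}\;=\;\delta_{ij}\,1_R\quad\text{in }R
\]
for each sink demanding $Z_i$ and each $j$; these identities are exactly what makes the same coefficients a solution over $\prescript{}{R}{R}$.

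For the second stage, let $N$ be any left $S$-module (take $\Xc=N$, $\tau=\mathrm{id}$) and use the coefficients $\phi(k^{d,e}),\phi(k^{d,i})\in S$. Because the global kernels are built from the local coefficients using only ring addition and multiplication, and $\phi$ is a ring homomorphism, the global kernels of this $S$-code are precisely $\phi(m_{e,j})$. Applying $\phi$ to the displayed identities gives $\sum_d\phi(k^{d,i})\phi(m_{d,j})=\delta_{ij}\,\phi(1_R)=\delta_{ij}\,1_S$, so the sink demanding $Z_i$ recovers $\sum_j\delta_{ij}\,1_S\cdot Z_j=1_S\cdot Z_i=Z_i$, using that $N$ is a unital module. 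Hence the network is solvable over $\prescript{}{S}{N}$.

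The routine part is the bookkeeping for the global kernels over the DAG. The step that needs care is the first stage: one must convert the operational requirement ``the decoder outputs the demanded message on every input'' into algebraic identities \emph{in $R$}, and this is exactly where faithfulness is indispensable — without it the identities hold only modulo $\ann(M)$, and that slack would not in general survive the push-forward along $\phi$ to an a priori unrelated ring $S$. I would also make explicit the convention that $\phi$ is unital, $\phi(1_R)=1_S$, which is what guarantees the recovered symbol carries coefficient $1_S$ rather than merely some idempotent of $S$.
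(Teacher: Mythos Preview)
Your proof is correct. Note, however, that the paper does not actually prove this theorem: it is quoted as Lemma~I.6 of Connelly and Zeger and used as a black box in Section~\ref{sub:sec:review_network_coding_over_rings}, so there is no ``paper's own proof'' to compare against here.

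Your two-stage argument is the natural one. The reduction from a faithful $R$-module to $\prescript{}{R}{R}$ by computing global transfer kernels $m_{e,j}\in R$ inductively along the DAG, invoking faithfulness to turn the decoding requirement into exact identities $\sum_d k^{d,i}m_{d,j}=\delta_{ij}1_R$ in $R$, and then pushing these identities forward along $\phi$ to obtain a solution over any $S$-module, is essentially the standard proof. Your remark that one must take $\phi$ to be unital (so that $\phi(1_R)=1_S$ and the recovered coefficient on $Z_i$ is genuinely $1_S$) is well placed; the paper's standing convention that rings have $1$ and modules satisfy $1\cdot m=m$ implicitly carries this assumption for homomorphisms as well.
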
 

Every ring is a faithful module over itself. Hence, if a network is scalar linearly solvable over $R$ (i.e., over the module $\prescript{}{R}{R}$) and if there is a homomorphism from $R$ to $S$, then the network is scalar linearly solvable over $S$.
Further, choosing $S=R$ and the identity map as the homomorphism, we observe that linear solvability over some faithful $R$-module implies scalar linear solvability over $R$.

As a corollary to Theorem~\ref{thm:CoZ_1_6} we obtain a generalization of a key observation from~\cite{STLYL_TCOMM_19} regarding circular-shift linear network codes. In the terminology of network coding over rings this observation from~\cite{STLYL_TCOMM_19} states that for any finite \emph{cyclic} group $G$, a multicast network has a scalar linear solution over $\Fb_2[G]$ if and only if it has a scalar linear solution over $\Fb_2$. 
Note that a scalar linear solution over $\Fb_2[G]$ uses $\Cs=\Fb_2[G]$ as the module, and hence, achieves the maximum possible rate $\dim(\Cs)/|G| = 1$.
This maximum possible rate can be achieved by circular-shift network codes only if the network has a scalar linear solution over $\Fb_2$, i.e., if the network already enjoys a low complexity solution. For such networks, it might be unnecessary to use circular-shift network codes (instead of solutions over $\Fb_2$).  
Thus, while designing circular-shift network codes the networks that are of interest are those that do not have a scalar linear solution over $\Fb_2$, and for these networks a solution exists only if $\Cs \subsetneq \Fb_2[G]$, i.e., the achievable rate is strictly less than $1$.
The following result is a generalization to arbitrary finite groups $G$ and arbitrary networks (not necessarily multicast).

\begin{corollary} \label{cor:existence_over_F2}
Let $G$ be a finite group. A network has a scalar linear solution over $\Fb_2[G]$ if and only if it has a scalar linear solution over $\Fb_2$.
\end{corollary}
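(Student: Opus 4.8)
The plan is to obtain both implications as immediate consequences of Theorem~\ref{thm:CoZ_1_6} by exhibiting ring homomorphisms in each direction and invoking the fact that every ring is a faithful module over itself. Recall that a network is scalar linearly solvable over a ring $R$ precisely when it is solvable over the faithful module $\prescript{}{R}{R}$.

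For the ``only if'' direction, I would use the augmentation homomorphism $\varepsilon: \Fb_2[G] \to \Fb_2$ defined by $\varepsilon\!\left(\sum_{g \in G} a_g g\right) = \sum_{g \in G} a_g$. A short check using the multiplication rule~\eqref{eq:addition_multiplication_group_algebra} shows that $\varepsilon$ is a (unital) ring homomorphism: it clearly respects addition and sends $1e$ to $1$, and $\varepsilon\!\left(\left(\sum_h a_h h\right)\left(\sum_g b_g g\right)\right) = \sum_{g}\sum_{h} a_h b_{h^{-1}g} = \left(\sum_h a_h\right)\!\left(\sum_g b_g\right)$. Since $\Fb_2[G]$ is a faithful module over itself, scalar linear solvability over $\Fb_2[G]$ together with the homomorphism $\varepsilon$ and Theorem~\ref{thm:CoZ_1_6} give solvability over every $\Fb_2$-module, in particular over $\prescript{}{\Fb_2}{\Fb_2}$, i.e., scalar linear solvability over $\Fb_2$.

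For the ``if'' direction, I would use the unital inclusion $\iota: \Fb_2 \to \Fb_2[G]$ sending $a \mapsto a e$, where $e$ is the identity element of $G$; this is trivially a ring homomorphism. Since $\Fb_2$ is a faithful module over itself, scalar linear solvability over $\Fb_2$ together with $\iota$ and Theorem~\ref{thm:CoZ_1_6} yield solvability over every $\Fb_2[G]$-module, in particular over $\prescript{}{\Fb_2[G]}{\Fb_2[G]}$, i.e., scalar linear solvability over $\Fb_2[G]$.

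There is essentially no hard step here: the entire content is Theorem~\ref{thm:CoZ_1_6} (Lemma~I.6 of~\cite{CoZ_Part2_IT_18}), and the only thing to verify is that the two natural maps $\varepsilon$ and $\iota$ are ring homomorphisms, which is routine. The only point deserving care is that Theorem~\ref{thm:CoZ_1_6} requires a \emph{faithful} source module, which is why we phrase scalar linear solvability over a ring $R$ as solvability over $\prescript{}{R}{R}$ and use that a ring is always faithful over itself.
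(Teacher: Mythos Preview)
Your proof is correct and follows essentially the same approach as the paper: both directions are obtained from Theorem~\ref{thm:CoZ_1_6} via the augmentation map $\sum_g a_g g \mapsto \sum_g a_g$ and the inclusion $a \mapsto a e$, using that a ring is faithful as a module over itself. Your write-up is slightly more detailed (you verify multiplicativity of $\varepsilon$ and flag the faithfulness hypothesis explicitly), but the argument is identical.
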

\begin{proof}
The function that maps $\sum_{g} a_g g$ to $\sum_g a_g$ is a ring homomorphism from $\Fb_2[G]$ onto $\Fb_2$. From Theorem~\ref{thm:CoZ_1_6}, if there is a scalar linear solution over $\Fb_2[G]$ (which is a faithful $\Fb_2[G]$-module), there exists a scalar linear solution over $\Fb_2$. 

Similarly, the function that maps $a \in \Fb_2$ to $a\,e \in \Fb_2[G]$, where $e$ is the identity element of $G$, is a ring homomorphism. Thus, the existence of a scalar linear solution over $\Fb_2$ implies the existence of a scalar linear solution over $\Fb_2[G]$.
\end{proof}

Lemma~II.6 of~\cite{CoZ_Part2_IT_18} analyzes the case where a network is solvable over an unfaithful $R$-module $M$. We review the aspects of the proof of this lemma that are important for us. The proof uses the fact that $\ann(M)$ is a two-sided ideal in $R$ and that $M$ is a faithful $R/\ann(M)$-module, see~\cite{Herstein_1968}. Using the natural homomorphism from $R$ to $R/\ann(M)$ the proof shows that the existence of an $\prescript{}{R}{M}$ linear solution implies the existence of an $\prescript{}{R/\ann(M)}{M}$ linear solution. Since $M$ is faithful over $R/\ann(M)$, using Theorem~\ref{thm:CoZ_1_6}, we conclude that the existence of an $\prescript{}{R}{M}$-linear solution implies the existence of a scalar linear solution over $R/\ann(M)$. 
Hence, the statement of~\cite[Lemma~II.6]{CoZ_Part2_IT_18} is essentially the ``only if'' part of 

\begin{theorem} \label{thm:CoZ_2_6} 
A network is linearly solvable over $\prescript{}{R}{M}$ if and only if it is scalar linearly solvable over $R/\ann(M)$.
\end{theorem}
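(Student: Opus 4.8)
The plan is to prove the two implications separately. The ``only if'' direction is precisely the argument recalled in the paragraph preceding the statement (it is \cite[Lemma~II.6]{CoZ_Part2_IT_18}); the ``if'' direction is the reverse passage, lifting a solution from the quotient ring back up to $R$. Both directions rest on a single elementary fact: the $R$-action on $M$ factors through the quotient map $\pi\colon R \to R/\ann(M)$. Indeed, since every element of $\ann(M)$ annihilates $M$, the rule $\bar r\,m := r\,m$, for any preimage $r\in R$ of $\bar r\in R/\ann(M)$, is a well-defined $(R/\ann(M))$-module structure on $M$, and conversely this structure pulls back along $\pi$ to the original $R$-module structure. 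Moreover, $M$ is a \emph{faithful} $(R/\ann(M))$-module; this is standard, see~\cite{Herstein_1968}.

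For the ``only if'' direction I would start from an $\prescript{}{R}{M}$-linear solution $\{k^{d,e}\},\{k^{d,i}\}$ with some edge alphabet $\Xc$ and injection $\tau$, and replace every coefficient by its image under $\pi$ while keeping $\Xc$ and $\tau$ fixed. Because $\pi(k^{d,e})\,m = k^{d,e}\,m$ and $\pi(k^{d,i})\,m = k^{d,i}\,m$ for all $m\in M$, the coding rule~\eqref{eq:coding_operation} and the decoding rule~\eqref{eq:decoding_operation} produce exactly the same edge symbols, so this yields an $\prescript{}{R/\ann(M)}{M}$-linear solution. Since $M$ is a faithful $(R/\ann(M))$-module, Theorem~\ref{thm:CoZ_1_6} applied with both rings equal to $R/\ann(M)$ and the identity homomorphism shows that solvability over the faithful module $M$ forces solvability over \emph{every} $(R/\ann(M))$-module --- in particular over $\prescript{}{R/\ann(M)}{(R/\ann(M))}$, which is scalar linear solvability over $R/\ann(M)$.

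For the ``if'' direction I would start from a scalar linear solution over $R/\ann(M)$, i.e., a solution over the faithful module $\prescript{}{R/\ann(M)}{(R/\ann(M))}$, and again apply Theorem~\ref{thm:CoZ_1_6} with the identity homomorphism on $R/\ann(M)$ to obtain a solution over every $(R/\ann(M))$-module, in particular over $\prescript{}{R/\ann(M)}{M}$, say with coefficients $\{\bar k^{d,e}\},\{\bar k^{d,i}\}$. I would then lift each $\bar k^{d,e}$ and $\bar k^{d,i}$ to an arbitrary $\pi$-preimage $k^{d,e},k^{d,i}\in R$, keeping the same $\Xc$ and $\tau$. Since the $R$-action on $M$ is the pullback of the $(R/\ann(M))$-action, $k^{d,e}\,m = \bar k^{d,e}\,m$ and $k^{d,i}\,m = \bar k^{d,i}\,m$ for all $m\in M$, so~\eqref{eq:coding_operation} and~\eqref{eq:decoding_operation} again give identical edge symbols and the lifted code is an $\prescript{}{R}{M}$-linear solution.

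I do not expect a genuine obstacle here; the only thing requiring attention is the bookkeeping around the edge alphabet, namely checking that solvability is preserved when only the ring and its action on $M$ change while $M$, $\Xc$ and $\tau$ are held fixed. This is immediate because the symbol on every edge depends on the coding coefficients only through their action on elements of $M$, and that action is unchanged under passing to the quotient or lifting back. The remaining content is just invoking Theorem~\ref{thm:CoZ_1_6} twice, once in each direction, with the identity homomorphism, to convert freely between ``solvable over the faithful module $M$'' and ``solvable over the free rank-one module $R/\ann(M)$''.
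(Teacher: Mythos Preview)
Your proposal is correct and follows essentially the same route as the paper. Both directions rest on the factorization of the $R$-action on $M$ through $R/\ann(M)$, invoke Theorem~\ref{thm:CoZ_1_6} with the identity homomorphism to pass between the faithful module $M$ and the regular module $R/\ann(M)$, and then lift or project the coding coefficients while noting that the edge symbols depend only on the action on $M$; the paper's proof of the ``if'' part carries out exactly this lifting, with the explicit induction on a topological ordering that you allude to.
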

\begin{proof}
The proof of the ``if'' part is similar to the proof of~\cite[Lemma~II.6]{CoZ_Part2_IT_18} and uses the same ideas in the logically reverse direction. We provide a detailed proof here for completeness. 

Assume that $M$ is an $R$-module and denote the annihilator of $M$ over $R$ as $J=\ann(M)$. Let $S=R/J$ be the ring whose elements are the distinct cosets $r + J$ of $J$ in $R$. Extend the action of $R$ on $M$ to an action $\odot: S \times M \to M$ defined as 
$s \odot m = rm$, for any choice of $r \in R$ belonging to the coset $s \in R/J$.
Note that this a valid definition, since for any two elements $r,r' \in s$, we have $r-r' \in J$, and hence, $(r-r')m =0$, i.e., $rm=r'm$.
It is straightforward to verify that $\odot$ is an action that makes $M$ an $S$-module.

Now assume that a given network has a scalar linear solution over $S$, i.e., over the module $\prescript{}{S}{S}$. Since this is a faithful $S$-module, from Theorem~\ref{thm:CoZ_1_6}, this network has a solution over $\prescript{}{S}{M}$. Let the coding coefficients of this solution be $\{\bar{k}^{d,e}\}$, $\{\bar{k}^{d,i}\}$, where each coefficient is a coset of $J$ in $R$. In order to construct a network code over $\prescript{}{R}{M}$, we choose the network coding coefficients $\{k^{d,e}\}$, $\{k^{d,i}\}$ as follows: let \mbox{$k^{d,e} \in R$} be any element of the coset $\bar{k}^{d,e}$, and let $k^{d,i} \in R$ be any element of the coset $\bar{k}^{d,i}$. 
With this choice, we know that $k^{d,e}m = \bar{k}^{d,e} \odot m$ and $k^{d,i}m = \bar{k}^{d,i} \odot m$ for all $m \in M$. That is, the action of the network coding coefficients from $R$ and the action of the corresponding coefficients from $S$ are identical when applied on $M$. 
A standard induction argument on the topologically ordered list of vertices of the directed acyclic network (as used in the proofs of Lemma~\ref{lem:annihilator} and~\cite[Lemma~II.6]{CoZ_Part2_IT_18}) shows that for each edge of the network the symbol carried by the $\prescript{}{S}{M}$-linear network code and the symbol carried by the $\prescript{}{R}{M}$-linear network code are identical, and hence, $\{k^{d,e}\}$, $\{k^{d,i}\}$ is a linear solution over $\prescript{}{R}{M}$.
\end{proof}

\subsection{Network Codes from Semi-Simple Abelian Group Algebras} \label{sub:sec:semi-simple}

\emph{In the rest of this section we will assume that $G$ is an Abelian group of \emph{odd} order and $\Cs$ is an ideal in the commutative ring $\Fb_2[G]$.}

The fact that $2 \nmid |G|$ implies that $\Fb_2[G]$ is semi-simple, i.e., $\Fb_2[G]$ is isomorphic to a direct a product of matrix rings over finite fields, and the assumption that $G$ is Abelian implies that the ring $\Fb_2[G]$ is commutative and each matrix ring in its decomposition is a finite field of characteristic $2$. 
Several well known families of error correcting codes are ideals in Abelian group algebras, such as BCH codes, punctured Reed-Muller codes, quadratic residue codes and bicyclic codes~\cite{Ber_Cybernetics_II_77,Wil_BellSys_70,Imai_InfControl_77,KeS_ContemporaryMath_01}.

The transform domain treatment of Abelian codes in~\cite{RaS_IT_92} provides an isomorphism of $\Fb_2[G]$ onto a direct product of finite fields using Discrete Fourier Transforms. This subsumes the spectral characterization of cyclic codes~\cite{blahut2003algebraic}. We know that~\cite[Theorem~1]{RaS_IT_92} there exists an isomorphism 
\begin{equation} \label{eq:dft_isomorphism}
\dft:~ \Fb_2[G] \to \mathcal{R} \triangleq \Fb_{q_1} \times \Fb_{q_2} \times \cdots \times \Fb_{q_t}
\end{equation} 
where $t$ is a positive integer and $q_1,\dots,q_t$ are powers of $2$. The ring $\mathcal{R}$ has $t$ minimal ideals, the $k^\tth$ minimal ideal 
is generated by $\theta_k=(0,\dots,0,1,0,\dots,0)$ where the only non-zero entry of $\theta_k$ occurs in the $k^\tth$ position. The ideal generated by $\theta_k$ is 
\begin{equation*}
\langle \theta_k \rangle = \{0\} \times \cdots \times \{0\} \times \Fb_{q_k} \times \{0\} \times \cdots \times \{0\}.
\end{equation*} 
The ring $\mathcal{R}$ contains exactly $2^t$ ideals, and any ideal of $\mathcal{R}$ is a direct sum of some of the minimal ideals. If $J$ is an ideal of $\mathcal{R}$ then there exists a $T(J) \subset \{1,2,\dots,t\}$ such that $J = \oplus_{k \in T(J)} \langle \theta_k \rangle$.
It is straightforward to show that
\begin{align}
\ann(J) &= \oplus_{k \notin T(J)} \left\langle \theta_k \right\rangle \text{ and }
\mathcal{R}/\ann(J) \cong \prod_{k \in T(J)} \Fb_{q_k}. \label{eq:R_by_annJ}
\end{align} 
If $\Cs$ is an ideal in $\Fb_2[G]$ and $J=\dft(\Cs)$ is the image of $\Cs$ in $\mathcal{R}$ under the isomorphism~\eqref{eq:dft_isomorphism}, then we will use $T(\Cs)$ to denote $T(J)$. 

We are now ready to characterize the existence of network coding solutions over Abelian codes $\Cs$. 

\begin{lemma} \label{lem:solution_condition_semisimple}
Let $\Cs$ be an ideal in the semi-simple commutative group algebra $\Fb_2[G]$. A network has a linear solution over $\prescript{}{\Fb_2[G]}{\Cs}$ if and only if the network is scalar linearly solvable over each finite field $\Fb_{q_k}$, $k \in T(\Cs)$.
\end{lemma}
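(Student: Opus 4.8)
The plan is to combine Theorem~\ref{thm:CoZ_2_6} with the explicit description of $\ann(\Cs)$ and $\mathcal{R}/\ann(\Cs)$ afforded by the Fourier isomorphism~\eqref{eq:dft_isomorphism}, and then to reduce the question of scalar linear solvability over the product ring $\prod_{k \in T(\Cs)}\Fb_{q_k}$ to simultaneous scalar linear solvability over each factor $\Fb_{q_k}$. By Theorem~\ref{thm:CoZ_2_6}, a network is linearly solvable over $\prescript{}{\Fb_2[G]}{\Cs}$ if and only if it is scalar linearly solvable over $\Fb_2[G]/\ann(\Cs)$. Passing through the isomorphism $\dft$ and writing $J=\dft(\Cs)$, we have $\Fb_2[G]/\ann(\Cs)\cong \mathcal{R}/\ann(J)\cong \prod_{k\in T(\Cs)}\Fb_{q_k}$ by~\eqref{eq:R_by_annJ}, so the claim reduces to: a network is scalar linearly solvable over $R_1\times\cdots\times R_m$ (with $R_j=\Fb_{q_k}$ for $k\in T(\Cs)$) if and only if it is scalar linearly solvable over each $R_j$.

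For the ``only if'' direction I would invoke Theorem~\ref{thm:CoZ_1_6}: the product ring $\prod_j R_j$ is a faithful module over itself, and each projection $\prod_j R_j \to R_j$ is a surjective ring homomorphism, so scalar linear solvability over the product implies scalar linear solvability over every factor. For the ``if'' direction, suppose for each $j$ we have a scalar linear solution $\{k^{d,e}_{(j)}\}, \{k^{d,i}_{(j)}\}$ over $R_j$. Form the coefficients $k^{d,e}=(k^{d,e}_{(1)},\dots,k^{d,e}_{(m)})\in\prod_j R_j$ and similarly for the decoding coefficients. Since arithmetic in the product ring is componentwise, the symbol carried on any edge $e$ under the combined code is, in each component $j$, exactly the symbol carried by the $j$th solution; this is the standard topological-order induction on the DAG (as used in the proofs of Lemma~\ref{lem:annihilator} and Theorem~\ref{thm:CoZ_2_6}). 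Hence each sink recovers, componentwise, its demanded message, so the combined code is a scalar linear solution over $\prod_j R_j$. Mapping these coefficients back through $\dft^{-1}$ and invoking Theorem~\ref{thm:CoZ_2_6} once more yields the $\prescript{}{\Fb_2[G]}{\Cs}$-linear solution.

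The only real subtlety—which is the step I expect to be the ``main obstacle''—is the componentwise solvability of a network over a product ring; everything else is bookkeeping through the isomorphism $\dft$ and appeals to results already in hand. The subtlety is minor: it rests on the fact that in $\prod_j R_j$ both the coding operation~\eqref{eq:coding_operation} and the decoding operation~\eqref{eq:decoding_operation} act coordinatewise, and that the messages $Z_1,\dots,Z_s$, being arbitrary elements of $\prod_j R_j$, decompose into independent coordinate messages; one then runs the induction once per coordinate (equivalently, runs $m$ independent copies of the network in parallel). Care is needed only to note that the edge symbols, sources' outputs, and each step of the induction are literally the concatenation of the corresponding quantities from the $m$ separate solutions, which follows immediately from coordinatewise multiplication and addition. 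I would state this product-ring fact as a short standalone claim and then assemble the two directions as above.
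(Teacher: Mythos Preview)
Your proposal is correct and follows exactly the paper's route: apply Theorem~\ref{thm:CoZ_2_6} to reduce to scalar solvability over $\Fb_2[G]/\ann(\Cs)\cong\prod_{k\in T(\Cs)}\Fb_{q_k}$, then reduce solvability over the product to solvability over each factor. The only difference is that the paper dispatches this last step by citing Lemma~II.12 of~\cite{CoZ_Part1_IT_18}, whereas you prove it directly via projections (for ``only if'') and componentwise assembly (for ``if''); your argument is precisely a proof of that cited lemma.
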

\begin{proof}
From Theorem~\ref{thm:CoZ_2_6},~\eqref{eq:dft_isomorphism} and~\eqref{eq:R_by_annJ}, a network is solvable over the $\Fb_2[G]$-module $\Cs$ if and only if the network has a scalar linear solution over 
\begin{equation*}
\Fb_2[G]/\ann(\Cs) \cong \mathcal{R}/\dft(\ann(\Cs)) \cong \prod_{k \in T(\Cs)} \Fb_{q_k}.
\end{equation*} 
From Lemma~II.12 of~\cite{CoZ_Part1_IT_18} we know that a network is scalar linearly solvable over a finite direct product of finite rings if and only if it is solvable over each ring in the product. This completes the proof.
\end{proof}

The number of finite fields in the decomposition~\eqref{eq:dft_isomorphism} and the sizes of these finite fields can be determined from the \emph{conjugacy classes} of $G$~\cite{RaS_IT_92}. 
We now recall this result from~\cite{RaS_IT_92}.
As usual, we will assume that $(G,\cdot)$ is a multiplicative group. The conjugacy class $C_g$ containing the group element $g \in G$ is
\begin{equation*}
C_g = \left\{g,g^2,g^4,\dots,g^{2^{\ell-1}}\right\}
\end{equation*} 
where $\ell=|C_g|$ is the smallest integer such that $g^{2^\ell}=g$, and is known as the \emph{exponent} of $C_g$. 
The distinct conjugacy classes of $G$ form a partition of $G$. The number finite fields $t$ in the decomposition~\eqref{eq:dft_isomorphism} is equal to the number of distinct conjugacy classes of $G$. 
Let $g_1,\dots,g_t \in G$ be such that $C_{g_1},\dots,C_{g_t}$ are the distinct conjugacy classes. Then $G = C_{g_1} \cup \cdots \cup C_{g_t}$ and 
\begin{equation*}
\textstyle \Fb_2[G] \cong \prod_{k=1}^{t} \Fb_{q_k} \text{ where } q_k = 2^{|C_{g_k}|} \text{ for each } k=1,\dots,t.
\end{equation*} 

\subsection{Circular-Shift Linear Network Codes} \label{sub:sec:circular-shift}

We now retrieve the main algebraic results on circular-shift linear network codes derived in~\cite{TSLYL_IT_19,STLYL_TCOMM_19} using our group algebraic framework. 

Circular-shift linear network codes correspond to the case where $G=\left\{e,y,y^2,\dots,y^{n-1}\right\}$, $y^n=e$, is a cyclic group. This is because the regular matrix representation of $g \in G$ is the $n \times n$ cyclic permutation matrix $\rhoreg_g$ over $\Fb_2$.
We represent the elements of group algebra $\Fb_2[G]$ as $\sum_{i=0}^{n-1} m_i y^i$.
Let the distinct conjugacy classes be those generated by the elements $y^{j_1},\dots,y^{j_t}$, i.e., $C_{y^{j_1}},\dots,C_{y^{j_t}}$ where \mbox{$j_1,\dots,j_t \in \left\{0,1,\dots,n-1\right\}$}. 
Further, let $\omega$ be a primitive $n^\tth$ root of unity in a suitable algebraic extension of $\Fb_2$.
The map $\Phi\left(\sum_{i=0}^{n-1} m_i y^{i}\right) = \left(\hat{m}_1,\dots,\hat{m}_t\right)$ where
\begin{equation*}
 \hat{m}_k = \sum_{i=0}^{n-1} m_i \omega^{i j_k} \text{ for } k=1,\dots,t,
\end{equation*} 
is an isomorphism~\eqref{eq:dft_isomorphism}. 
Observe that $|C_{y^{j_k}}|$ is the smallest integer $\ell_k$ such that $y^{j_k 2^{\ell_k}} = y^{j_k}$, i.e., $j_k 2^{\ell_k} = j_k \mod n$. And the $k^\tth$ component in the finite field decomposition~\eqref{eq:dft_isomorphism} is $\Fb_{2^{\ell_k}}$.

We will use the convention that $j_1=0$, i.e., $C_{y^{j_1}} = C_{e} = \{e\}$. The exponent of this conjugacy class is $1$, and thus the first finite field in the decomposition~\eqref{eq:dft_isomorphism} is $\Fb_{q_1}=\Fb_2$. The corresponding minimal ideal is $\langle \theta_1 \rangle = \Fb_2 \times \{0\} \times \cdots \times \{0\}$.

If a group code $\Cs$ is such that $\Phi(\Cs) \supset \langle \theta_1 \rangle$ then $1 \in T(\Cs)$. 
For any such $\Cs$, from Lemma~\ref{lem:solution_condition_semisimple}, a network has a solution over $\Cs$ only if it is scalar linearly solvable over $\Fb_2$. 
Also, any other finite field $\Fb_{q_k}$ in the decomposition~\eqref{eq:dft_isomorphism} is a field extension of $\Fb_2$, i.e., there exists a ring homomorphism from $\Fb_2$ to $\Fb_{q_k}$. Hence, from Theorem~\ref{thm:CoZ_1_6}, if a network is scalar linearly solvable over $\Fb_2$ then it is scalar linearly solvable over each $\Fb_{q_k}$, $k \in T(\Cs)$.
Using these observations with Lemma~\ref{lem:solution_condition_semisimple} we conclude that if $\Cs$ is such that $1 \in T(\Cs)$ then a network is solvable over $\Cs$ if and only if it is scalar linearly solvable over $\Fb_2$.


Now, let $\Cs$ be such that \mbox{$1 \notin T(\Cs)$}. This implies that for any \mbox{$\sum_{i=0}^{n-1}m_i y^i \in \Cs$}, the image $\Phi\left(\sum_{i=0}^{n-1} m_i y^i\right) = (\hat{m}_1,\hat{m}_2,\dots,\hat{m}_t)$ satisfies 
\mbox{$0 = \hat{m}_1 = \sum_{i=0}^{n-1} m_i$}.
Hence, every element of $\Cs$ has even weight. From Corollary~\ref{cor:even_weight} we deduce that any network code over $\Cs$ is of degree at the most \mbox{$(n-1)/2$}. Since $G$ is a cyclic group, this network code over $\Cs$ corresponds to a circular shift network code of degree at the most \mbox{$(n-1)/2$}. Hence, we have proved

\begin{lemma} \label{lem:n-1_by_2_circular}
Let $G$ be a cyclic group of odd order $n$, and $\Cs$ be any ideal of $\Fb_2[G]$ such that $1 \notin T(\Cs)$. The degree of a circular shift  network code obtained from any network code over $\Cs$ is at the most $(n-1)/2$.
\end{lemma}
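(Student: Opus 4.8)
The plan is to derive the statement as a short corollary of Corollary~\ref{cor:even_weight}, the key point being that the hypothesis $1 \notin T(\Cs)$ forces every element of $\Cs$ to have even Hamming weight. First I would unwind the convention fixed just above: the first conjugacy class is $C_{y^{j_1}} = C_e = \{e\}$ with $j_1 = 0$, so the first coordinate of the isomorphism $\Phi$ of~\eqref{eq:dft_isomorphism} sends $\sum_{i=0}^{n-1} m_i y^i$ to $\hat{m}_1 = \sum_{i=0}^{n-1} m_i \omega^{i\cdot 0} = \sum_{i=0}^{n-1} m_i$, i.e. to the $\Fb_2$-sum of all the coefficients. Since $1 \notin T(\Cs)$ and $\Phi(\Cs) = \oplus_{k \in T(\Cs)} \langle \theta_k \rangle$, every element of $\Phi(\Cs)$ has a zero in its first coordinate; pulling this back through $\Phi$, every $m = \sum_{i=0}^{n-1} m_i y^i \in \Cs$ satisfies $\sum_{i=0}^{n-1} m_i = 0$ in $\Fb_2$, which is exactly the statement that $\wt(\taunat(m))$ is even. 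Hence $\Cs$ is an ideal of $\Fb_2[G]$ all of whose elements have even weight.

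With that in hand, I would simply invoke Corollary~\ref{cor:even_weight}: any linear network code over such a $\Cs$ yields a permute-and-add network code of degree at most $\lfloor n/2 \rfloor$, and since $n$ is odd this equals $(n-1)/2$. Finally, because $G$ is cyclic of order $n$, each regular-representation matrix $\rhoreg_g$ is a cyclic permutation (circular shift) of $\Fb_2^n$, so by Theorem~\ref{thm:degree} the resulting permute-and-add network code is precisely a circular-shift network code, and its degree is at most $(n-1)/2$. This completes the argument.

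There is no substantive obstacle here; the only step that needs a little care is the bookkeeping that the convention $j_1 = 0$ makes the first $\Phi$-coordinate the coefficient-sum, after which everything is an immediate consequence of Corollary~\ref{cor:even_weight}. If one prefers to bypass the spectral description entirely, an equivalent route is to note (as in the proof of Corollary~\ref{cor:even_weight}) that $\sum_{h \in G} 1\,h$ annihilates any even-weight ideal, so the all-ones vector lies in $\taunat(\ann(\Cs))$; the repetition code is then a subcode of $\taunat(\ann(\Cs))$ and $\rcov(\ann(\Cs)) \le \lfloor n/2 \rfloor = (n-1)/2$, and Theorem~\ref{thm:degree} finishes the proof.
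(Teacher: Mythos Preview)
Your proposal is correct and follows essentially the same route as the paper: both argue that $1 \notin T(\Cs)$ forces $\hat{m}_1 = \sum_i m_i = 0$ so that every element of $\Cs$ has even weight, then invoke Corollary~\ref{cor:even_weight} together with $\lfloor n/2 \rfloor = (n-1)/2$ and the fact that cyclic $G$ yields circular-shift permutations. The alternative route you sketch at the end (directly exhibiting the all-ones element in $\ann(\Cs)$) is just the content of the proof of Corollary~\ref{cor:even_weight} unfolded, so it is not a genuinely different argument.
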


\subsubsection{When $n$ is prime with primitive root $2$} 

In this case $2$ is a primitive root modulo $n$, with $n$ prime, i.e., every integer $1,\dots,n-1$ is some power of $2$ modulo $n$.
Thus, the conjugacy class $C_{y}$ is equal to $\left\{y,y^2,y^3,\dots,y^{n-1}\right\}$. The remaining element of the group is the identity which forms a conjugacy class by itself $C_e = \{e\}$. The exponents of the conjugacy classes $C_e$ and $C_y$ are $1$ and $n-1$, respectively.
Hence,
\begin{equation*}
\Fb_2[G] \cong \Fb_2 \times \Fb_{2^{n-1}}.
\end{equation*} 
Let $\Cs$ be the ideal $\Phi^{-1}\left(\left\{0\right\} \times \Fb_{2^{n-1}}\right)$. 
From Lemma~\ref{lem:n-1_by_2_circular} we deduce that any network code over $\Cs$ is of degree \mbox{$(n-1)/2$}. The rate of any network code over $\Cs$ is \mbox{$\log_2 |\Cs|/n = (n-1)/n$.} Further, from Lemma~\ref{lem:solution_condition_semisimple}, a network has a solution over $\Cs$ if and only if it has a scalar linear solution over $\Fb_{2^{n-1}}$. This result generalizes~\cite[Theorem~4]{TSLYL_IT_19} from multicast networks to arbitrary networks.

We also recover the main result of \emph{rotate-and-add} network coding~\cite{HaK_ITW_10} for single source multicast networks. A rotate-and-add network code is a linear network code that uses only cyclic permutations as encoding kernels in the intermediate nodes of a network, i.e., \mbox{$\wt(k^{d,e})=1$} for all adjacent pairs of edges $(d,e)$ in the network. 
In other words, the coding operation~\eqref{eq:enc_permute_add} of a rotate-and-add network code involves the application of exactly one cyclic permutation on each incoming vector $X_d$.
However, there are no restrictions imposed on the decoding coefficients $k^{d,i}$.
The following is a restatement of one of the key results (with a mild strengthening) on rotate-and-add network codes using our terminology.

\begin{lemma}{~\cite[Theorem~1]{HaK_ITW_10}} \label{lem:rotate-and-add}
Let $n$ be an odd prime with primitive root $2$, and $G$ be the cyclic group of order $n$. A multicast network with $N_{\sf rx} \leq n$ sinks has a $\Fb_2[G]$-linear network coding solution with rate $(n-1)/n$ where all encoding coefficients $k^{d,e}$ satisfy $\wt(k^{d,e})=1$ and all decoding coefficients $k^{d,i}$ satisfy $\wt(k^{d,i}) \leq (n-1)/2$.
\end{lemma}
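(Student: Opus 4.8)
The plan is to isolate the one genuinely new ingredient --- that the encoding coefficients can be forced to weight $1$ --- since everything else is already in hand. Take the ideal $\Cs=\Phi^{-1}(\{0\}\times\Fb_{2^{n-1}})$ considered above. By Lemma~\ref{lem:solution_condition_semisimple} a network is solvable over $\prescript{}{\Fb_2[G]}{\Cs}$ if and only if it is scalar linearly solvable over $\Fb_{2^{n-1}}$, and for a multicast network with $N_{\sf rx}\leq n\leq 2^{n-1}=|\Fb_{2^{n-1}}|$ sinks the classical multicast theorem~\cite{LYC_IT_03,JSCEEJT_IT_05} supplies such a scalar linear solution, hence an $\Fb_2[G]$-linear solution over $\Cs$; any such solution has rate $\dim(\Cs)/n=(n-1)/n$, and by Lemma~\ref{lem:n-1_by_2_circular} the induced circular-shift code has degree at most $(n-1)/2$. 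What is left to show is that, additionally, every encoding coefficient can be made to have weight exactly $1$. For this, record the dictionary: since $2$ is a primitive root modulo the odd prime $n$ one has $\Fb_2[G]\cong\Fb_2\times\Fb_{2^{n-1}}$, $\ann(\Cs)=\Phi^{-1}(\Fb_2\times\{0\})=\{\,0,\ \sum_{i=0}^{n-1}y^i\,\}$, and $\Fb_2[G]/\ann(\Cs)\cong\Fb_{2^{n-1}}$ carries $y$ to a primitive $n$th root of unity $\beta$; consequently the $n$ weight-$1$ ring elements $e,y,\dots,y^{n-1}$ map bijectively onto the cyclic subgroup $\mu_n=\langle\beta\rangle\leq\Fb_{2^{n-1}}^\times$, which has order $n\geq N_{\sf rx}$.

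The main step is to show that the multicast network admits a scalar linear solution over $\Fb_{2^{n-1}}$ all of whose local encoding coefficients lie in $\mu_n$. I would re-run the flow-based proof of multicast solvability: fix for each sink a family of edge-disjoint source-to-sink paths attaining its min-cut, and process the edges in topological order, maintaining the invariant that for every sink the global encoding vectors on its current frontier of path-edges are linearly independent. When a new edge $e$ out of a node $v$ is processed, its global vector $\mathbf{g}_e=\sum_{d\in\rmin(v)}k^{d,e}\mathbf{g}_d$ must avoid, for each of the at most $N_{\sf rx}$ sinks routed through $e$, the span of that sink's remaining frontier vectors, a proper subspace of $U=\mathrm{span}\{\mathbf{g}_d:d\in\rmin(v)\}$ (proper because it omits the global vector of that sink's incoming path-edge at $v$). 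The new twist is that $(k^{d,e})_{d\in\rmin(v)}$ must be chosen inside the grid $\mu_n^{|\rmin(v)|}$ rather than in $\Fb_{2^{n-1}}^{|\rmin(v)|}$; an admissible choice exists because $|\mu_n|=n\geq N_{\sf rx}$, via a counting argument on this grid --- exploiting that distinct cosets of $\mu_n$ in $\Fb_{2^{n-1}}^\times$ are disjoint, so many of the subspaces to be avoided do not meet the grid at all. I expect this restriction of the coefficient search to the multiplicative subgroup $\mu_n$ while keeping the hypothesis at $N_{\sf rx}\leq n$ sinks to be the crux of the proof and the place demanding care, since the naive union bound over $\mu_n^{|\rmin(v)|}$ is essentially tight; alternatively, this restricted-coefficient multicast statement can be imported directly from~\cite{HaK_ITW_10}.

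Finally, lift the resulting $\Fb_{2^{n-1}}$-solution to $\prescript{}{\Fb_2[G]}{\Cs}$ as in the ``if'' part of Theorem~\ref{thm:CoZ_2_6}: for an encoding coefficient equal to $\beta^{j}\in\mu_n$ take the lift $k^{d,e}:=y^{j}$, which has weight $1$; for a decoding coefficient $c\in\Fb_{2^{n-1}}$ take the smaller-weight lift among the two lifts $r$ and $r+\sum_i y^i$ of $c$, which has weight at most $(n-1)/2$ because $n$ is odd (equivalently, apply the covering-radius argument of Corollary~\ref{cor:even_weight} to the decoding coefficients only, using $\rcov(\ann(\Cs))=\lfloor n/2\rfloor=(n-1)/2$). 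By Theorem~\ref{thm:CoZ_2_6} and Lemma~\ref{lem:annihilator} this is still a linear solution over $\Cs$, and reading it off through $\taunat$ and $\rhoreg$ as in Section~\ref{sub:sec:permute-and-add} yields a circular-shift (rotate-and-add) network coding solution of rate $(n-1)/n$ with $\wt(k^{d,e})=1$ for every adjacent pair $(d,e)$ and $\wt(k^{d,i})\leq(n-1)/2$ for every decoding coefficient, as required.
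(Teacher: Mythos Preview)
Your approach matches the paper's exactly: solve the multicast problem over $\Fb_{2^{n-1}}$ with encoding coefficients restricted to the cyclic group $\mu_n=\{1,\alpha,\dots,\alpha^{n-1}\}$, lift $\alpha^i\mapsto y^i$ to obtain weight-$1$ encoders, and invoke Lemma~\ref{lem:n-1_by_2_circular} (equivalently the covering radius of the repetition code $\ann(\Cs)$) for the decoding coefficients. The one step where you hesitate --- confining all local encoding coefficients to $\mu_n$ under the hypothesis $N_{\sf rx}\le n=|\mu_n|$ --- is exactly what the Jaggi--Sanders algorithm~\cite{JSCEEJT_IT_05} already guarantees (a multicast solution exists with every local coefficient drawn from any fixed subset $\mathcal{P}\subset\Fb_q$ once $|\mathcal{P}|\ge N_{\sf rx}$), and the paper simply cites this rather than re-deriving the flow argument you sketch.
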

\begin{proof}
We know that $\Fb_2[G] \cong \Fb_2 \times \Fb_{2^{n-1}}$. The isomorphism~\eqref{eq:dft_isomorphism} $\Phi\left(\sum_{i=0}^{n-1} m_i y^i\right) = (\hat{m}_1,\hat{m}_2)$ satisfies $\hat{m}_1=\sum_i m_i$ and $\hat{m}_2 = \sum_{i} m_i \alpha^i$ where $\alpha$ is a primitive $n^\tth$ root of unity~\cite{blahut2003algebraic,RaS_IT_92}.
When $m=1e, 1y, 1y^2,\dots, 1y^{n-1}$, the corresponding image $(\hat{m}_1,\hat{m}_2)$ satisfies $\hat{m}_2=1,\alpha,\alpha^2,\dots,\alpha^{n-1}$, respectively, and $\hat{m}_1=1$.

From the Jaggi-Sanders algorithm~\cite{JSCEEJT_IT_05} we know that a multicast network has a scalar linear solution over a finite field $\Fb_q$ with the encoding coefficients restricted to a subset $\mathcal{P} \subset \Fb_q$ if $N_{\sf rx} \leq |\mathcal{P}|$. Using the finite field $\Fb_{2^{n-1}}$ and the subset $\mathcal{P}=\left\{1,\alpha,\dots,\alpha^{n-1}\right\}$, we conclude that the given multicast network has a scalar linear solution over $\Fb_{2^{n-1}}$ with the encoding coefficients taking values from $\left\{1,\alpha,\cdots,\alpha^{n-1}\right\}$. We will denote the set of encoding coefficients of this scalar linear network code as $\left\{\bar{k}^{d,e}\right\}$.

We will use $\Cs=\{0\} \times \Fb_{2^{n-1}}$ to design our $\Fb_2[G]$-linear network code. In this case $\ann(\Cs)=\Fb_2 \times \{0\}$ and $\Fb_2[G] /\ann(\Cs) \cong \Fb_{2^{n-1}}$. From Theorem~\ref{thm:CoZ_2_6} and using the fact that $\{\bar{k}^{d,e}\}$ is a $\Fb_{2^{n-1}}$-linear solution, it is clear that the there exists a network coding solution over $\Cs$. Retracing the proof of Theorem~\ref{thm:CoZ_2_6}, we obtain a network code $\left\{k^{d,e}\right\}$ over $\Cs$ by choosing 
\begin{equation*}
 k^{d,e} = 1.y^i \text{ if } \bar{k}^{d,e} = \alpha^i.
\end{equation*} 
Clearly $\Phi\left(k^{d,e}\right) = (1,\bar{k}^{d,e})$ is an element of the coset of $\ann(\Cs)$ in $\Fb_2[G]$ to which $\Phi^{-1}\left((0, k^{d,e})\right)$ belongs. We conclude that this derived network code is a solution of rate $\dim(\Cs)/n=(n-1)/n$ and all encoding coefficients have weight $1$.

We further observe that the ideal $\Cs$ is such that $1 \notin T(\Cs)$. Hence, from Lemma~\ref{lem:n-1_by_2_circular}, the decoding coefficients $k^{d,i}$ can be chosen such that $\wt(k^{d,i}) \leq (n-1)/2$. 
\end{proof}

\subsubsection{When $n$ is an arbitrary odd integer}

Let $\ell_0$ be the multiplicative order of $2$ modulo $n$. Then $C_{y} = \left\{y,y^2,\dots,y^{2^{\ell_0 - 1}}\right\}$ and $|C_{y}|= \ell_0$. 
Assume, without loss of generality, that in the decomposition~\eqref{eq:dft_isomorphism} $q_1=2$ (as usual) and $q_2,\dots,q_{t_0+1}=2^{\ell_0}$. In other words, let $t_0$ be the number of conjugacy classes with exponent equal to $\ell_0$. We now derive a lower bound on the value of $t_0$.

\begin{lemma} \label{lem:circular_n_odd}
Let $\varphi(n) = \big|\left\{ j~|~(j,n)=1, 1 \leq j \leq n-1 \right\}\big|$ be the Euler's totient function. Then $\ell_0 | \varphi(n)$ and $t_0 \geq \varphi(n)/\ell_0$.
\end{lemma}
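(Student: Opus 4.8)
The plan is to work entirely inside the cyclic group $G$ of order $n$ and count group elements according to their conjugacy classes, using the fact that the conjugacy class of $y^j$ is $C_{y^j} = \{y^j, y^{2j}, y^{4j}, \dots\}$, whose size is the multiplicative order of $2$ modulo $n/(j,n)$. First I would establish $\ell_0 \mid \varphi(n)$: since $\ell_0$ is the order of $2$ in the multiplicative group $(\Zb/n\Zb)^\times$, which has order $\varphi(n)$, Lagrange's theorem gives $\ell_0 \mid \varphi(n)$ immediately. This is the easy half.

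For the bound $t_0 \geq \varphi(n)/\ell_0$, the key observation is that the elements $y^j$ with $(j,n)=1$ — there are exactly $\varphi(n)$ of them — all have conjugacy classes of size exactly $\ell_0$. Indeed, for such $j$, the class $C_{y^j}$ consists of $y^{j}, y^{2j}, y^{4j}, \dots$ and its size is the least $\ell$ with $2^\ell j \equiv j \pmod n$, i.e. the least $\ell$ with $2^\ell \equiv 1 \pmod n$ (cancelling the unit $j$), which is precisely $\ell_0$. So all $\varphi(n)$ of these elements lie in conjugacy classes of exponent $\ell_0$. Next I would note that each such class, being contained in $\{y^j : (j,n)=1\}$ (the generators of $G$ are permuted among themselves by $x \mapsto x^2$ since $2$ is coprime to $n$), has size exactly $\ell_0$ and these $\varphi(n)$ generators are therefore partitioned into classes each of size $\ell_0$. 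Hence the number of such classes is $\varphi(n)/\ell_0$, and since every one of them has exponent $\ell_0$, we get $t_0 \geq \varphi(n)/\ell_0$.

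In more detail, the step I would carry out carefully is: the map $\sigma: j \mapsto 2j \bmod n$ on $\{0,1,\dots,n-1\}$ restricts to a permutation of the subset $U = \{j : 1 \le j \le n-1, (j,n)=1\}$ because $(2j,n) = (j,n)$ when $n$ is odd. The conjugacy classes are the orbits of $\sigma$, so $U$ is a union of $\sigma$-orbits. Each orbit through $j \in U$ has length equal to the least $\ell$ with $2^\ell j \equiv j \pmod n$; since $j$ is invertible mod $n$, this is the least $\ell$ with $2^\ell \equiv 1 \pmod n$, namely $\ell_0$. Therefore $U$ decomposes into orbits all of length $\ell_0$, so $|U| = \varphi(n)$ is divisible by $\ell_0$ (re-deriving the first claim, in fact) and the number of these orbits is exactly $\varphi(n)/\ell_0$. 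Each contributes a conjugacy class of exponent $\ell_0$, so $t_0$, the total number of conjugacy classes of exponent $\ell_0$, satisfies $t_0 \ge \varphi(n)/\ell_0$.

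The only mild subtlety — the part to state precisely rather than wave at — is the claim $(2j, n) = (j, n)$ for odd $n$, which guarantees $\sigma$ preserves $U$; this is where oddness of $n$ is used, and it follows since $\gcd(2,n)=1$. Everything else is routine orbit-counting. I do not expect a genuine obstacle here; the argument is a short counting lemma.
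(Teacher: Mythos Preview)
Your proposal is correct and follows essentially the same approach as the paper: both arguments show that for $(j,n)=1$ the conjugacy class $C_{y^j}$ has size exactly $\ell_0$, observe that the set $\{y^j : (j,n)=1\}$ is closed under $y^j \mapsto y^{2j}$ because $n$ is odd, and then count orbits to obtain $\varphi(n)/\ell_0$ classes of exponent $\ell_0$. The only cosmetic difference is that you invoke Lagrange's theorem up front for $\ell_0 \mid \varphi(n)$, whereas the paper obtains this divisibility as a byproduct of the orbit decomposition (as you also do in your detailed paragraph).
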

\begin{proof}
Consider any $j \in \{1,\dots,n-1\}$ with $(j,n)=1$. We will first show that $|C_{y^j}|=\ell_0$. Let $\ell$ be the smallest integer such that \mbox{$y^{j 2^\ell} = y^j$}, i.e., \mbox{$j 2^\ell = j \mod n$}. This implies that $n \big|~ j \left(2^\ell - 1\right)$. Since $(j,n)=1$, we deduce that $\ell$ is the smallest integer such that $n \big | \left(2^\ell-1\right)$. Note that this criterion is independent of $j$ as long as $(j,n)=1$. Hence, $|C_{y^j}| = |C_{y^1}| = \ell_0$.

Since $n$ is odd, observe that if $(j,n)=1$ then \mbox{$(2j \mod n,n)=1$} as well. Thus, $\{y^j~|~(j,n)=1\}$ is a disjoint union of conjugacy classes. From the discussion in the previous paragraph each such conjugacy class has size $\ell_0$. Thus, the size of this set $\left\{y^j~|~(j,n)=1\right\}$, viz. $\varphi(n)$, is divisible by $\ell_0$.
The set $\{y^j~|~(j,n)=1\}$ is a union of $\varphi(n)/\ell_0$ distinct conjugacy classes, each of size $\ell_0$. Hence, there exist at least $\varphi(n)/\ell_0$ conjugacy classes of $G$ whose exponent is equal to $\ell_0$. This implies that in the decomposition of $\Fb_2[G]$ as product of finite rings there are at least $\varphi(n)/\ell_0$ fields with size equal to $2^{\ell_0}$.
\end{proof}

Let $\Cs = \oplus_{k=2}^{t_0+1}\langle \theta_k \rangle$ be the direct sum of ideals corresponding to the $t_0$ conjugacy classes with exponent equal to $\ell_0$, i.e.,
\begin{equation} \label{eq:ideal_n_odd}
\Phi(\Cs) = \{0\} \times \Fb_{2^{\ell_0}} \times \cdots \times \Fb_{2^{\ell_0}} \times \{0\} \cdots \times \{0\}.
\end{equation} 
Since $1 \notin T(\Cs)$, from Lemma~\ref{lem:n-1_by_2_circular} we deduce that any network code over $\Cs$ is of degree $\rcov\left(\ann(\Cs)\right) \leq (n-1)/2$. To compute the rate, note that $\log_2 |\Cs| = t_0 \ell_0$. Applying Lemma~\ref{lem:circular_n_odd}, we have $\log_2 |\Cs| \geq \varphi(n)$, and hence, the rate of any network code over $\Cs$ is $t_0\ell_0/n \geq \varphi(n)/n$. Finally, we use Lemma~\ref{lem:solution_condition_semisimple} to see that a network has a solution over $\Cs$ if and only if it is scalar linearly solvable over $\Fb_{2^{\ell_0}}$.
Hence, we have proved

\begin{lemma} \label{lem:main_n_odd}
Let $\ell_0$ be the multiplicative order of $2$ modulo $n$, $t_0$ the number of $\ell_0$-sized conjugacy classes of the cyclic group of order $n$, and $\Cs$ be the ideal in~\eqref{eq:ideal_n_odd}.
If a network has a scalar linear solution over $\Fb_{2^{\ell_0}}$ then it has a degree $\rcov(\ann(\Cs)) \leq (n-1)/2$ circular-shift network coding solution with rate $t_0\ell_0/n \geq \varphi(n)/n$.
\end{lemma}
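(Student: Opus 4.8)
The plan is to assemble the statement directly from the three ingredients already established: Lemma~\ref{lem:solution_condition_semisimple} (existence of solutions over a semi-simple Abelian code), Lemma~\ref{lem:n-1_by_2_circular} (degree bound when $1 \notin T(\Cs)$), and Lemma~\ref{lem:circular_n_odd} (the lower bound $t_0 \geq \varphi(n)/\ell_0$). In fact, the text preceding the statement already carries out essentially all of the work, so the ``proof'' is mostly a matter of stitching these observations together in the right order and invoking Theorem~\ref{thm:degree} to pass from the module-theoretic network code over $\Cs$ to an honest permute-and-add (here, circular-shift) network code.

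First I would fix $\Cs$ to be the ideal in~\eqref{eq:ideal_n_odd}, so that $\Phi(\Cs)=\{0\}\times\Fb_{2^{\ell_0}}^{t_0}\times\{0\}^{t-t_0-1}$, i.e.\ $T(\Cs)=\{2,3,\dots,t_0+1\}$ and every field indexed by $T(\Cs)$ equals $\Fb_{2^{\ell_0}}$. By Lemma~\ref{lem:solution_condition_semisimple}, a network has a linear solution over $\prescript{}{\Fb_2[G]}{\Cs}$ if and only if it is scalar linearly solvable over $\Fb_{2^{\ell_0}}$; in particular, the hypothesis of the lemma guarantees such a solution exists. Next I would compute the rate: since $\Cs$ is a direct sum of $t_0$ minimal ideals each equal (as an $\Fb_2$-vector space) to $\Fb_{2^{\ell_0}}$, we have $\dim_{\Fb_2}\Cs = t_0\ell_0$, so by Theorem~\ref{thm:degree} (and the rate formula $\dim(\Cs)/n$ recorded just after it) the resulting circular-shift network code has rate $t_0\ell_0/n$. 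Lemma~\ref{lem:circular_n_odd} gives $t_0\ell_0 \geq \varphi(n)$, hence rate $\geq \varphi(n)/n$.

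For the degree, I would note that $1 \notin T(\Cs)$ by construction, so Lemma~\ref{lem:n-1_by_2_circular} applies and every circular-shift network code obtained from a network code over $\Cs$ has degree at most $(n-1)/2$; equivalently, by Theorem~\ref{thm:degree}, this degree equals $\rcov(\ann(\Cs))$, which is therefore $\leq (n-1)/2$. Combining: given a scalar linear solution over $\Fb_{2^{\ell_0}}$, Lemma~\ref{lem:solution_condition_semisimple} produces a linear solution over $\Cs$; Theorem~\ref{thm:degree} converts it into a circular-shift network coding solution of degree $\rcov(\ann(\Cs))\leq (n-1)/2$; and the rate of this solution is $t_0\ell_0/n \geq \varphi(n)/n$. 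That is exactly the claim.

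I do not anticipate a genuine obstacle here, since all the hard technical content is packaged in the earlier lemmas; the only thing to be careful about is bookkeeping — making sure that the ``degree'' in the statement is interpreted via Theorem~\ref{thm:degree} as $\rcov(\ann(\Cs))$ rather than being asserted to be exactly $(n-1)/2$, and making sure the $\Fb_2$-dimension of $\Cs$ is correctly identified as $t_0\ell_0$ (each $\langle\theta_k\rangle$ for $k\in T(\Cs)$ contributes $\log_2 q_k = \ell_0$). If anything is mildly delicate, it is confirming that the ideal in~\eqref{eq:ideal_n_odd} is well-defined independently of which $t_0$ of the exponent-$\ell_0$ conjugacy classes we pick — but since all such fields are isomorphic to $\Fb_{2^{\ell_0}}$ and Lemma~\ref{lem:solution_condition_semisimple} only depends on the multiset of field sizes in $T(\Cs)$, this choice is immaterial.
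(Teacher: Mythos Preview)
Your proposal is correct and follows essentially the same approach as the paper: the paper's argument (given in the paragraph immediately preceding the lemma) likewise invokes Lemma~\ref{lem:n-1_by_2_circular} for the degree bound, computes $\log_2|\Cs|=t_0\ell_0$ and applies Lemma~\ref{lem:circular_n_odd} for the rate bound, and uses Lemma~\ref{lem:solution_condition_semisimple} for existence of the solution over $\Cs$. Your explicit mention of Theorem~\ref{thm:degree} to pass from the $\prescript{}{\Fb_2[G]}{\Cs}$-linear code to a circular-shift code is a helpful clarification but does not change the argument.
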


In general, the bounds promised by Lemma~\ref{lem:main_n_odd} on rate and degree are loose. We illustrate this in the following example.

\begin{example} \label{ex:n_15}
Let $n=15$. The following are the conjugacy classes of the cyclic group of order $15$,
\begin{align*}
&\left\{e\right\}, \left\{y,y^2,y^4,y^8\right\}, \left\{y^3,y^6,y^{12},y^9\right\}, \left\{y^7,y^{14},y^{13},y^{11}\right\}\\&\text{ and } \left\{y^5,y^{10}\right\}.
\end{align*}

In this case $\ell_0=4$, $\varphi(n)=8$, and hence, Lemma~\ref{lem:main_n_odd} guarantees the existence of a rate $8/15$ circular-shift network code of degree at the most $7$. This construction provides a solution if the network is scalar linearly solvable over $\Fb_{2^4}$.

On the other hand, we observe that $t_0=3$, and hence, we could use a group code $\Cs$ with $\Phi(\Cs)=\{0\} \times \Fb_{2^4} \times \Fb_{2^4} \times \Fb_{2^4} \times \{0\}$. This network code has rate $12/15$. We show in Appendix~\ref{app:ex:n_15} that $\rcov(\ann(\Cs))=6$, hence, this is a degree $6$ network code. A solution over $\Cs$ exists if and only if the network is scalar linearly solvable over $\Fb_{2^4}$.
\end{example}

The generality of our result in Lemma~\ref{lem:solution_condition_semisimple} provides design flexibility to trade-off rate for lower degree. We illustrate this in the following example.

\begin{example} \label{ex:rate-degree_tradeoff}
Continuing with Example~\ref{ex:n_15}, let $\langle \theta_1 \rangle,\dots, \langle \theta_5 \rangle$ be the minimal ideals corresponding to the conjugacy classes $C_{e},C_{y},C_{y^3},C_{y^7},C_{y^5}$, respectively. Consider the ideals $\Cs_1,\Cs_2,\Cs_3$ that provide decreasing value of degree at the cost of decreasing network coding rates

\emph{(i)} $T(\Cs_1)=\{2,3,4\}$, i.e., $\Phi(\Cs_1)=\langle \theta_2 \rangle + \langle \theta_3 \rangle + \langle \theta_4 \rangle$. This is the ideal from Example~\ref{ex:n_15} that yields network codes with rate $12/15$ and degree $6$.

\emph{(ii)} $T(\Cs_2)=\{2,3\}$. The annihilator of $\Cs_2$ is the $[15,7]$ double-error correcting BCH code with covering radius $3$; see~\cite[Table~10.1]{cohen1997covering}. Thus, network codes over $\Cs_2$ are of rate $8/15$ and degree $3$.

\emph{(iii)} $T(\Cs_3)=\{2\}$. Its annihilator is the $[15,11]$ Hamming code, which has covering radius $1$. Hence, network codes over $\Cs_3$ have rate $4/15$ and degree $1$.

A network has a solution over $\Cs_1$, $\Cs_2$, $\Cs_3$ if and only if it has a scalar linear solution over $\Fb_{2^4}$. Thus, $\Cs_1,\Cs_2,\Cs_3$ achieve a rate-complexity trade-off over the same class of solvable networks.
\end{example}

Our next result is the observation that Hamming codes can be used as annihilators to design network codes with the smallest possible degree $\delta=1$.

\begin{lemma} \label{lem:simplex_codes}
Let $n=2^{\ell_0}-1$ for an integer $\ell_0$. If a network has a scalar linear solution over $\Fb_{2^{\ell_0}}$ then it has a rate $\ell_0/n$ circular-shift network coding solution of degree $1$.
\end{lemma}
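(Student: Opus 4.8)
The plan is to specialize the machinery of Section~\ref{sub:sec:circular-shift} to the case $n = 2^{\ell_0}-1$. Here $2$ has multiplicative order $\ell_0$ modulo $n$, since $2^{\ell_0} \equiv 1 \pmod n$ and no smaller power works. Thus the conjugacy class $C_y = \{y, y^2, \dots, y^{2^{\ell_0-1}}\}$ has size $\ell_0$, and I want to work with the single minimal ideal $\langle \theta_2 \rangle$ it generates, i.e.\ take $\Cs$ to be the ideal with $\Phi(\Cs) = \langle \theta_2 \rangle$, so $T(\Cs) = \{2\}$ and $\Fb_2[G]/\ann(\Cs) \cong \Fb_{2^{\ell_0}}$. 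Since $1 \notin T(\Cs)$, Lemma~\ref{lem:solution_condition_semisimple} gives that a network has a solution over $\Cs$ if and only if it is scalar linearly solvable over $\Fb_{2^{\ell_0}}$, which is the hypothesis. The rate is $\dim(\Cs)/n = \ell_0/n$ as claimed.

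The core of the argument is to show the degree equals $1$, i.e.\ that $\rcov(\ann(\Cs)) = 1$. The key observation is that $\taunat(\ann(\Cs))$ is a $[n, n-\ell_0]$ binary linear code with $n = 2^{\ell_0}-1$, obtained as the image under $\Phi^{-1}$ (composed with $\taunat$) of $\{0\} \times \Fb_{2^{\ell_0}} \times \cdots$, i.e.\ it is the cyclic code whose zero set corresponds to all spectral components \emph{except} those indexed by the conjugacy class $C_y$. By the spectral/BCH characterization, a codeword $\sum m_i y^i$ lies in $\ann(\Cs)$ precisely when $\hat m_2 = \sum_i m_i \omega^{i} = 0$, where $\omega$ is a primitive $n^{\rm th}$ root of unity --- and since the Frobenius conjugates $\omega^{2}, \omega^{4}, \dots$ are forced to vanish simultaneously over $\Fb_2$, this single condition $\sum_i m_i \omega^i = 0$ defines the code. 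But that is exactly the parity-check condition for the $[2^{\ell_0}-1, 2^{\ell_0}-1-\ell_0]$ Hamming code (the check matrix has columns $1, \omega, \omega^2, \dots, \omega^{n-1}$, i.e.\ all nonzero vectors of $\Fb_2^{\ell_0}$). The Hamming code has covering radius $1$, so $\rcov(\ann(\Cs)) = 1$.

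Putting these together with Theorem~\ref{thm:degree}: any linear network code over $\Cs$ yields a permute-and-add (here circular-shift, since $G$ is cyclic) network code of degree $\rcov(\ann(\Cs)) = 1$; combined with Lemma~\ref{lem:solution_condition_semisimple} and the hypothesis of scalar linear solvability over $\Fb_{2^{\ell_0}}$, such a solution over $\Cs$ exists. The rate is $\ell_0/n$. The main obstacle I anticipate is making the spectral identification airtight: one must verify carefully that $\Phi(\Cs) = \langle\theta_2\rangle$ forces exactly one nontrivial $\Fb_2$-linear constraint $\sum_i m_i\omega^i = 0$ (rather than $\ell_0$ independent constraints over $\Fb_2$), so that $\ann(\Cs)$ has $\Fb_2$-dimension $n - \ell_0$ and its check matrix has all distinct nonzero columns --- i.e.\ that $\taunat(\ann(\Cs))$ is genuinely the Hamming code and not merely a subcode of it. Once that identification is in place, invoking the known covering radius $1$ of the Hamming code finishes the proof.
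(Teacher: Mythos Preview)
Your proposal is correct and follows essentially the same approach as the paper: choose $\Cs = \Phi^{-1}(\langle\theta_2\rangle)$ corresponding to the conjugacy class $C_y$, identify $\taunat(\ann(\Cs))$ as the $[2^{\ell_0}-1,\,2^{\ell_0}-1-\ell_0]$ Hamming code (so that $\rcov(\ann(\Cs))=1$), and invoke Lemma~\ref{lem:solution_condition_semisimple} together with Theorem~\ref{thm:degree}. The paper simply asserts that $\Cs$ is the simplex code and its annihilator the Hamming code, whereas you spell out the spectral parity-check argument; your worry about ``one constraint versus $\ell_0$ constraints'' is harmless, since the single $\Fb_{2^{\ell_0}}$-equation $\sum_i m_i\omega^i=0$ unpacks to exactly $\ell_0$ independent $\Fb_2$-equations, and primitivity of $\omega$ makes the columns $1,\omega,\dots,\omega^{n-1}$ run over all nonzero vectors of $\Fb_2^{\ell_0}$.
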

\begin{proof}
The conjugacy class $C_y$ has exponent $\ell_0$. Let $\langle \theta_2 \rangle$ be the ideal corresponding to $C_y$, and let $\Cs$ be the ideal $\Phi^{-1}(\langle \theta_2 \rangle)$. Then $\Cs$ is the simplex code of length $n$ and its annihilator is the Hamming code. Clearly the rate of $\Cs$ is $\ell_0/n$ and the covering radius of the annihilator is $1$.
Also, \mbox{$T(\Cs)=\{2\}$} and \mbox{$q_2=2^{\ell_0}$}. Then the result follows from Lemma~\ref{lem:solution_condition_semisimple}. 
\end{proof}

\subsubsection{Comparison with Sun et al.~\cite{STLYL_TCOMM_19}}

\begin{table*}[t!]
\renewcommand{\arraystretch}{1.25}
\centering
\begin{tabular}{|l|c|c|c|c|}
\Xhline{5\arrayrulewidth}
Code & Length of Coded Packet $n$  &  Degree $\delta$ & Rate & Number of Sinks $N_{\sf rx}$ \\
\Xhline{5\arrayrulewidth}
Example~\ref{ex:rate-degree_tradeoff}, Ideal $\Cs_1$ & $15$ & $6$ & $12/15$ & $16$ \\
\hline 
Example~\ref{ex:rate-degree_tradeoff}, Ideal $\Cs_2$ & $15$ & $3$ & $8/15$ & $16$ \\
\hline
Example~\ref{ex:rate-degree_tradeoff}, Ideal $\Cs_3$ & $15$ & $1$ & $4/15$ & $16$ \\
\hline
Sun et al.~\cite[Theorem~7]{STLYL_TCOMM_19} & $15$  & $1$ & $8/15$ & $7$ \\
\hline
\hline
Lemma~\ref{lem:main_n_odd} & $7$ & $3$ & $6/7$ & $8$ \\
\hline
Lemma~\ref{lem:simplex_codes} & $7$ & $1$ & $3/7$ & $8$ \\ 
\hline
Sun et al.~\cite[Theorem~7]{STLYL_TCOMM_19} & $7$  & $1$ & $6/7$ & $3$ \\
\Xhline{5\arrayrulewidth}
\end{tabular}
\vspace{2mm}
\caption{Comparison of circular-shift network coding solutions for multicast networks.}
\label{tbl:comparison}
\end{table*}

Lemma~\ref{lem:main_n_odd} improves upon the result in~\cite[Theorem~4]{STLYL_TCOMM_19}, since the former applies to any network and the latter to only multicast networks. Our result also promises higher rate. When $n=7$, Theorem~4 of~\cite{STLYL_TCOMM_19} (see example in p.~2664) provides a rate $3/7$ network code, whereas Lemma~\ref{lem:main_n_odd} guarantees a rate $6/7$ code.

Also, Lemma~\ref{lem:main_n_odd} is similar in flavour to~\cite[Theorem~7]{STLYL_TCOMM_19}, but there are some essential differences. 
Theorem~7 of~\cite{STLYL_TCOMM_19} applies to only multicast networks and guarantees the existence of a network code where the encoding operations are of a given bounded degree $\delta$, which can be chosen by the code designer. It does not guarantee that the decoding operations are of low complexity. 
On the other hand, our result in Lemma~\ref{lem:main_n_odd} holds for non-multicast networks also. Our construction guarantees that the degree of the encoding as well as decoding operations is at the most $\rcov(\ann(\Cs))$. 

We now compare Lemma~\ref{lem:main_n_odd} with~\cite[Theorem~7]{STLYL_TCOMM_19} for multicast networks. 
We first recall this latter result from~\cite{STLYL_TCOMM_19}. 
Let $n$ be an odd integer and $\alpha$ be a primitive $n^\tth$ root of unity in $\Fb_{2^{\ell_0}}$. For any integer $1 \leq \delta \leq n$, let $K_{\delta}$ be equal to
\begin{align*}
 \left|\left\{ \sum_{i=0}^{n-1} a_i \alpha^i \Big| a_0,\dots,a_{n-1} \in \Fb_2, \, \wt\left(\left(a_0,\dots,a_{n-1}\right)\right) \leq \delta \right\}\right|.
\end{align*} 

That is, $K_{\delta}$ is the number of elements in $\Fb_{2^{\ell_0}}$ that can be expressed as the sum of at the most $\delta$ elements from $1,\alpha,\dots,\alpha^{n-1}$.
Theorem~7 of~\cite{STLYL_TCOMM_19} states that a multicast network has a circular-shift linear network coding solution of rate $\varphi(n)/n$ where all the encoding coefficients are of degree $\delta$ if the number of sink nodes $N_{\sf rx} \leq \frac{K_{\delta}\ell_0}{\varphi(n)} - 1$.

If we apply Lemma~\ref{lem:main_n_odd} to a multicast scenario, we require the field size $2^{\ell_0}$ to be large enough to accommodate a solution. From~\cite{JSCEEJT_IT_05} we know that $2^{\ell_0} \geq N_{\sf rx}$ is sufficient.
Hence, Lemma~\ref{lem:main_n_odd} guarantees the existence of circular-shift network coding solutions for multicast networks with up to $2^{\ell_0}$ sink nodes. 

Suppose $n$ is such that $\ell_0 \neq \varphi(n)$. We know that $\varphi(n) \geq 2 \ell_0$ (see Lemma~\ref{lem:circular_n_odd}) and $K_{\delta} \leq 2^{\ell_0}$.
Then~\cite[Theorem~7]{STLYL_TCOMM_19} can be applied to multicast networks with at the most $\frac{K_{\delta}\ell_0}{\varphi(n)} - 1 < 2^{\ell_0 - 1}$ receivers. 
Whereas Lemma~\ref{lem:main_n_odd} can be applied to multicast networks with up to $2^{\ell_0}$ receivers, which is at least two times the number of sinks that can be accommodated by~\cite{STLYL_TCOMM_19}.
For some choices of $n$ this factor could be much larger than $2$. For instance, when $n=31$, Lemma~6 serves $6$ times as many receivers as~\cite{STLYL_TCOMM_19}.
However, this ability to serve a larger number of receivers might be achieved at the cost of a larger degree in comparison to~\cite{STLYL_TCOMM_19}.
We illustrate this trade-off in Table~\ref{tbl:comparison} where we compare our new network coding solutions with those from~\cite[Theorem~7]{STLYL_TCOMM_19} for multicast networks for the cases $n=15$ and $n=7$.

\subsection{Codes from Non-Cyclic Abelian Groups} \label{sub:sec:non-cyclic-abelian}

Using non-cyclic groups can provide a wider range of choices in terms of achievable rate and degree of network codes. We illuminate this point by considering the case $n=9$ and comparing the network codes obtained from the cyclic group $G$ of order $9$ and the Abelian group $H$ which is the direct product two cyclic groups of order $3$.

The conjugacy classes of $G =\{e,y,\dots,y^8\}$, with $y^9=e$, are $C_{e}$, $C_y$ and $C_{y^3}$ which are of sizes $1$, $6$ and $2$, respectively. The three minimal ideals of $\Fb_2[G]$ have sizes $2$, $2^6$ and $2^2$, respectively. Since any ideal of $\Fb_2[G]$ is a sum of the minimal ideals, the network coding rates that are possible using this group algebra are $1/9$, $2/9$, $3/9$, $6/9$, $7/9$, $8/9$ and $1$.

Now consider the Abelian group $H$ generated by $x,y$ with $x^3=y^3=e$. The conjugacy classes of $H$ are
\begin{align*}
\{e\}, \left\{x,x^2\right\}, \left\{y,y^2\right\}, \left\{xy, x^2y^2\right\}, \left\{xy^2,x^2y\right\}.
\end{align*} 
Hence $\Fb_2[H]$ contains five minimal ideals of sizes $1,2,2,2,2$, respectively. This ring contains ideals of all possible dimensions $1,\dots,9$, and hence, provides more choice in terms of achievable rates.

We will now illustrate a rate $4/9$ network code obtained from $\Fb_2[H]$, which is not possible when using $\Fb_2[G]$.

Let $\langle \theta_1 \rangle, \dots, \langle \theta_5 \rangle$ be the ideals in $\Phi(\Cs)$ corresponding to the conjugacy classes $C_e,C_x,C_y,C_{xy},C_{xy^2}$, respectively. In the decomposition of $\Fb_2[H]$ we have $t=5$, $q_1=2$ and $q_2=\cdots=q_5=2^2$. 

Let $\Cs$ be such that $\Phi(\Cs)=\langle \theta_2 \rangle + \langle \theta_3 \rangle$. Then $\Cs \cong \{0\} \times \Fb_{2^2} \times \Fb_{2^2} \times \{0\} \times \{0\}$, $|\Cs|=2^4$. Hence, the network coding rate is $4/9$. The annihilator is 
\begin{align*}
\ann(\Cs) \cong \Fb_2 \times \{0\} \times \{0\} \times \Fb_{2^2} \times \Fb_{2^2}.
\end{align*} 
Thus, the annihilator of $\Cs$ is the set of all elements $m \in \Fb_2[H]$ such that the image $\Phi(m)=\left(\hat{m}_1,\dots,\hat{m}_5\right)$ satisfies $\hat{m}_2=\hat{m}_3=0$.

We will represent the elements of $\Cs$ as $\sum_{i=0}^{2}\sum_{j=0}^{2} m_{i,j} x^iy^j$, where \mbox{$m_{i,j} \in \Fb_2$}. If $\alpha$ is a primitive element of $\Fb_{2^2}$, then using the Fourier transform expressions~\cite{RaS_IT_92} for $\hat{m}_2$ and $\hat{m}_3$, we have
\begin{align}
0 &= \hat{m}_2 = \sum_{i=0}^{2} \sum_{j=0}^{2} \alpha^{i} m_{i,j}  \label{eq:bicyclic:1} \\
0 &= \hat{m}_3 = \sum_{i=0}^{2} \sum_{j=0}^{2} \alpha^{j} m_{i,j}. \label{eq:bicyclic:2}
\end{align} 
Now using the binary vector representation of $1,\alpha,\alpha^2 \in \Fb_{2^2}$, which are the column vectors $(1,0)$, $(0,1)$, $(1,1)$, respectively, and enumerating the components of $m$ as $\taunat(m) = (m_{0,0},m_{0,1},m_{0,2},m_{1,0},m_{1,1},m_{1,2},\dots,m_{2,2})$, we obtain the following $4 \times 9$ parity-check matrix for $\taunat(\ann(\Cs))$ from the check equations~\eqref{eq:bicyclic:1} and~\eqref{eq:bicyclic:2}
\begin{align*}
\begin{bmatrix}
1 & 1 & 1 & 0 & 0 & 0 & 1 & 1 & 1 \\
0 & 0 & 0 & 1 & 1 & 1 & 1 & 1 & 1 \\
1 & 0 & 1 & 1 & 0 & 1 & 1 & 0 & 1 \\
0 & 1 & 1 & 0 & 1 & 1 & 0 & 1 & 1
\end{bmatrix}.
\end{align*} 
We know that the covering radius of $\taunat(\ann(\Cs))$ is the smallest integer $a$ such that any vector in $\Fb_2^4$ can be written as the sum of at the most $a$ columns of its parity-check matrix. A manual inspection shows that $\rcov(\ann(\Cs))=2$. We remark that $\taunat(\ann(\Cs))$ is a $[9,5]$ code, and its covering radius $2$ is the smallest among all binary linear codes of this length and dimension, see~\cite[Table~7.1]{cohen1997covering}.

To summarize, we identified a family of rate $4/9$ permute-and-add network codes of degree $2$ using an ideal $\Cs$ of $\Fb_2[H]$. A network has a solution over $\Cs$ if and only if it has a solution of $\Fb_{2^2}$.
On the other hand, if we use the cyclic group $G$, then in order to achieve a rate of at least $4/9$, we need to use an ideal of dimension $6$. The annihilator of such an ideal will be of dimension $3$, and degree of the resulting network code will be at least $3$, since the smallest covering radius among all $[9,3]$ codes is $3$.

\section{Permute \& Add Network Codes over\\ Arbitrary Finite Fields} \label{sec:arbit_finite_fields}

In Sections~\ref{sec:network_coding_over_group_codes} and~\ref{sec:existence_of_solutions} we analyzed permute-and-add network codes over the binary field $\Fb_2$, i.e., these network codes were fractional linear network codes over $\Fb_2$.
The main tools used in Sections~\ref{sec:network_coding_over_group_codes} and~\ref{sec:existence_of_solutions} are not specific to the base field over which the network code is defined.
In the current section we highlight how these techniques generalize to network codes defined over an arbitrary finite field $\Fb_q$.

Circular-shift network codes (permute-and-add network codes that use only cyclic shifts) over prime fields $\Fb_p$, $p$ a prime, were introduced in~\cite{TSWY_COMML_20} for multicast networks. 
It was shown in~\cite{TSWY_COMML_20} that a rate-$1$ circular-shift network coding solution over $\Fb_p$ exists for a multicast network if and only if the multicast network has a scalar linear solution over $\Fb_p$. Further, a sufficient condition was identified for the existence of a rate-$(n-1)/n$ circular-shift network coding solution over $\Fb_p$ for multicast networks~\cite[Proposition~8]{TSWY_COMML_20}.
In contrast to~\cite{TSWY_COMML_20}, our results in this section hold for arbitrary groups of permutations, any base field $\Fb_q$ (not necessarily prime), and any directed acyclic multigraph (not necessarily multicast networks).

\subsection{Group Algebras over $\Fb_q$}

Let $G$ be any finite group with order $n$, $q$ be any prime power and $\Fb_q$ be the finite field of size $q$. The group algebra $\Fb_q[G]$ is the set $\left\{\sum_{g \in G} a_g g~|~a_g \in \Fb_q\right\}$ with addition and multiplication defined as usual, see~\eqref{eq:addition_multiplication_group_algebra}.
The regular representation of $g \in G$ in $\Fb_q^n$ is the $n \times n$ permutation matrix $\rhoreg_g$ over $\Fb_q$ given by~\eqref{eq:reg_group_rep}. 
The regular matrix representation of $\sum_{g \in G} r_g g \in \Fb_q[G]$ is the matrix $\sum_{g \in G} r_g \rhoreg_g$. 
This function defines an injective algebra homomorphism that maps $\Fb_q[G]$ into the matrix algebra $\Fb_q^{n \times n}$. 
Observe that 
the image of this homomorphism consists of sums of scaled permutation matrices.
As before, we will use the natural embedding $\taunat: \Fb_q[G] \to \Fb_q^n$ to represent an element $m=\sum_{g \in G} m_g g \in \Fb_q[G]$ as a length-$n$ vector $\taunat(m) = \left(m_g~|~g \in G\right)$.

\subsection{Permute-and-Add Network Codes over $\Fb_q$}

Let $\Cs$ be any ideal of $\Fb_q[G]$. Consider any $\Fb_q[G]$-linear network code over $\Cs$ with the encoding and decoding coefficients denoted by $\{k^{d,e}\}$ and $\{k^{d,i}\}$. This network code can be realized as a fractional linear network code over $\Fb_q$ as follows. 
At the source node that generates the message $Z_i \in \Cs$ all the outgoing edges carry the vector $\taunat(Z_i)$.
At every non-source node $v$ and every outgoing edge $e \in \rmout(v)$, the vector $X_e \in \Fb_q^n$ carried along $e$ is computed as 
\begin{align*}
X_e &= \sum_{d \in \rmin(v)} \sum_{g \in G} k^{d,e}_g \rhoreg_g \times X_d\\
&= \sum_{d \in \rmin(v)} \sum_{\substack{g \in G: \\ k^{d,e}_g \neq 0}} k^{d,e}_g \left(\rhoreg_g \times X_d\right),
\end{align*} 

where $k^{d,e} = \sum_{g \in G} k^{d,e}_g g$ is the expansion of $k^{d,e}$ in terms of its coefficients $k^{d,e}_g \in \Fb_q$, and $X_d \in \Fb_q^n$ is the vector carried along edge $d$.
The number of permutations and scalar multiplications applied on $X_d$ to compute $X_e$ is $\wt(k^{d,e}) = \left|\{g \in G~|~k^{d,e}_g \neq 0\}\right|$.
Finally, if $v$ is a sink node that demands the message $Z_i$, the decoding operation is realized as 
\begin{equation*}
\sum_{d \in \rmin(v)} \sum_{\substack{g \in G \\ k^{d,i}_g \neq 0}} k^{d,i}_g \left(\rhoreg_g \times X_d\right),
\end{equation*} 
where $k^{d,i} = \sum_{g \in G} k^{d,i}_g g$. If the decoding is successful, the above computation will yield $\taunat(Z_i)$.

We observe that all the encoding and decoding operations in this fractional linear network code consist of permutations and scalar multiplications applied on length-$n$ vectors. 
This is a permute-and-add network code over $\Fb_q$ with rate $k/n$ where $k$ is the dimension of $\Cs$ over $\Fb_q$. 
This network code has degree $\delta$ if $\wt(k^{d,e}), \wt(k^{d,i}) \leq \delta$ for all coding coefficients $k^{d,e}$ and $k^{d,i}$.

As in Section~\ref{sub:sec:permute-and-add}, we can use Lemma~\ref{lem:annihilator} to control the degree of the network code through the annihilator $\ann(\Cs)$. Let $\rcov(\ann(\Cs)) = \max_{\pmb{v} \in \Fb_q^n} \min_{\pmb{a} \in \taunat(\ann(\Cs))} \wt(\pmb{v} - \pmb{a})$ denote the covering radius of the subspace $\taunat(\ann(\Cs))$ in $\Fb_q^n$. 
Using the same argument that led to Theorem~\ref{thm:degree}, we deduce that every $\Fb_q[G]$-linear network code over an ideal $\Cs$ is a permute-and-add network code over $\Fb_q$ with degree $\rcov(\ann(\Cs))$.

\subsection{Existence of Permute-and-Add Network Coding Solutions over $\Fb_q$}

The highest possible network coding rate attainable using our technique is $1$, which is attained if $\Cs = \Fb_q[G]$, i.e., if the network code is a scalar linear code over $\Fb_q[G]$. Using the same idea as in the proof of Corollary~\ref{cor:existence_over_F2}, we arrive at the following generalization.

\begin{corollary}
For any finite group $G$, a network is scalar linearly solvable over $\Fb_q[G]$ if and only if it is scalar linearly solvable over $\Fb_q$.
\end{corollary}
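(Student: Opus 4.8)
The plan is to mimic the proof of Corollary~\ref{cor:existence_over_F2} verbatim, since the only property of $\Fb_2$ used there is that it is a ring with identity admitting a surjection from, and an injection into, the group algebra. Concretely, I would exhibit two ring homomorphisms and then invoke Theorem~\ref{thm:CoZ_1_6} in each direction, using the elementary fact that every ring is a faithful module over itself.

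First I would establish the ``only if'' direction. Consider the \emph{augmentation map} $\epsilon: \Fb_q[G] \to \Fb_q$ sending $\sum_{g \in G} a_g g$ to $\sum_{g \in G} a_g$. One checks immediately from the definition of multiplication in $\Fb_q[G]$ in~\eqref{eq:addition_multiplication_group_algebra} that $\epsilon$ is a ring homomorphism (the coefficient sum of a product of two elements is the product of the coefficient sums, because for fixed $g$ the inner sum $\sum_h a_h b_{h^{-1}g}$ summed over all $g$ factors as $(\sum_h a_h)(\sum_g b_g)$), and it is clearly surjective since $\epsilon(a\,e) = a$. Now if the network is scalar linearly solvable over $\Fb_q[G]$, then it is solvable over the faithful $\Fb_q[G]$-module $\prescript{}{\Fb_q[G]}{\Fb_q[G]}$, so Theorem~\ref{thm:CoZ_1_6} applied to the homomorphism $\epsilon: \Fb_q[G] \to \Fb_q$ yields a scalar linear solution over $\Fb_q$.

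For the ``if'' direction I would use the embedding $\iota: \Fb_q \to \Fb_q[G]$ sending $a$ to $a\,e$, where $e$ is the identity of $G$. This is a ring homomorphism since $e$ is the multiplicative identity of $\Fb_q[G]$ and $\iota$ respects addition and scalar structure. If the network is scalar linearly solvable over $\Fb_q$, i.e.\ over the faithful $\Fb_q$-module $\prescript{}{\Fb_q}{\Fb_q}$, then Theorem~\ref{thm:CoZ_1_6} applied to $\iota$ gives a scalar linear solution over $\Fb_q[G]$ (taking $S = \Fb_q[G]$, which is in particular an $\Fb_q[G]$-module over itself). Combining the two directions completes the proof.

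I do not anticipate a genuine obstacle: the argument is a direct transcription of Corollary~\ref{cor:existence_over_F2} with $\Fb_2$ replaced by $\Fb_q$. The only point requiring a line of care is confirming that the augmentation map is multiplicative over a general coefficient field rather than only over $\Fb_2$, but this is immediate from~\eqref{eq:addition_multiplication_group_algebra} and holds for any commutative coefficient ring. One could also phrase the whole thing slightly more abstractly by noting that these two homomorphisms split the structure map of $\Fb_q[G]$ as an $\Fb_q$-algebra, but the explicit maps are cleaner for the write-up.
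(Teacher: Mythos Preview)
Your proposal is correct and follows exactly the approach the paper intends: the paper itself states that this corollary is obtained ``using the same idea as in the proof of Corollary~\ref{cor:existence_over_F2}'', i.e., via the augmentation map $\sum_g a_g g \mapsto \sum_g a_g$ and the embedding $a \mapsto a\,e$ combined with Theorem~\ref{thm:CoZ_1_6}. Your added verification that the augmentation map is multiplicative over $\Fb_q$ is the only extra detail needed, and it is handled correctly.
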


This result generalizes~\cite[Theorem~4]{TSWY_COMML_20} (which applies to multicast networks, cyclic groups $G$, and prime fields $\Fb_p$) to arbitrary networks, finite groups and finite fields.

\subsubsection{Network Codes using Semi-Simple Abelian Group Algebras}

The structure of the algebra $\Fb_q[G]$ is completely known when $G$ is a finite group and the order $n$ of the group is relatively prime with the size $q$ of the finite field $\Fb_q$. In this case $\Fb_q[G]$ is semi-simple, and is isomorphic to a direct product of finite fields. See~\cite{RaS_IT_92} for the explicit description of the isomorphism $\dft: \Fb_q[G] \to \Fb_{q_1} \times \cdots \times \Fb_{q_t}$.
The number of finite fields in this direct product and their sizes can be determined from the conjugacy classes of $G$~\cite{RaS_IT_92}. For the group algebra $\Fb_q[G]$, the conjugacy class $C_g$ that contains the group element $g \in G$ is $C_g = \left\{g,g^q,g^{q^2},\dots,g^{q^{l-1}} \right\}$ where $l$ is the smallest integer such that $g^{q^l}=g$.
The number of finite fields $t$ in the image of the isomorphism $\dft$ is the number of distinct conjugacy classes of $G$. Suppose $C_{g_1},\dots,C_{g_t}$ are the distinct conjugacy classes, then the sizes of the finite fields are $q_k = q^{|C_{g_k}|}$, $k=1,\dots,t$.
As in Section~\ref{sub:sec:semi-simple}, there are exactly $2^t$ ideals in $\Fb_q[G]$, one ideal $\Cs$ corresponding to each choice of $T(\Cs) \subset \{1,\dots,t\}$, see~\eqref{eq:dft_isomorphism} and~\eqref{eq:R_by_annJ}.
We observe that the ideas used in the proof of Lemma~\ref{lem:solution_condition_semisimple} hold for the following generalization as well.

\begin{lemma} \label{lem:general_solution_condition_semisimple}
Let $G$ be a finite Abelian group, $q$ be a prime power that is relatively prime with $|G|$, and $\Cs$ be any ideal of $\Fb_q[G]$.
A network has a $\Fb_q[G]$-linear solution over $\Cs$ if and only if the network has a scalar linear solution over each $\Fb_{q_k}$, $k \in T(\Cs)$.
\end{lemma}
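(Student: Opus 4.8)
The plan is to re-run the proof of Lemma~\ref{lem:solution_condition_semisimple} essentially verbatim, checking at each step that nothing there was special to the base field $\Fb_2$. First I would record the structural input. Since $(q,|G|)=1$, Maschke's theorem makes $\Fb_q[G]$ semi-simple, and since $G$ is Abelian this algebra is commutative, so its Wedderburn decomposition is a finite product of finite fields; the explicit transform-domain isomorphism $\dft:\Fb_q[G]\to\mathcal{R}=\Fb_{q_1}\times\cdots\times\Fb_{q_t}$, together with the description of the $q_k$ via the conjugacy classes $C_{g_k}$ (namely $q_k=q^{|C_{g_k}|}$), is exactly~\cite{RaS_IT_92}, as recalled immediately above the statement.

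Next I would transport $\Cs$ through $\dft$. Setting $J=\dft(\Cs)$, the structure of a product of fields gives $J=\oplus_{k\in T(\Cs)}\langle\theta_k\rangle$ for a uniquely determined $T(\Cs)\subseteq\{1,\dots,t\}$ (this is how $T(\Cs)$ is defined), and the same elementary coordinatewise computation behind~\eqref{eq:R_by_annJ} — which uses only that an ideal of a product of fields is a sub-product — yields $\dft(\ann(\Cs))=\oplus_{k\notin T(\Cs)}\langle\theta_k\rangle$ and hence
\begin{equation*}
\Fb_q[G]/\ann(\Cs)\;\cong\;\mathcal{R}/\dft(\ann(\Cs))\;\cong\;\prod_{k\in T(\Cs)}\Fb_{q_k}.
\end{equation*}
None of this uses $q=2$.

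With this in hand the conclusion follows in two moves. Theorem~\ref{thm:CoZ_2_6}, which holds over an arbitrary ring, says the network is linearly solvable over $\prescript{}{\Fb_q[G]}{\Cs}$ if and only if it is scalar linearly solvable over $\Fb_q[G]/\ann(\Cs)\cong\prod_{k\in T(\Cs)}\Fb_{q_k}$. Then~\cite[Lemma~II.12]{CoZ_Part1_IT_18}, characterising scalar linear solvability over a finite direct product of finite rings as scalar linear solvability over each factor, closes the argument, since each factor $\Fb_{q_k}$ is a finite field.

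I do not anticipate a real obstacle. The only place the Abelian hypothesis is genuinely needed is the assertion that every Wedderburn component of $\Fb_q[G]$ is a \emph{field} rather than a larger matrix ring (for non-Abelian $G$ one would instead obtain matrix rings and would have to reason about scalar linear solvability over matrix rings, which is a different matter), and the only place $(q,|G|)=1$ is needed is semi-simplicity; both are immediate. The remaining work is the bookkeeping needed to confirm that~\eqref{eq:dft_isomorphism},~\eqref{eq:R_by_annJ}, Theorem~\ref{thm:CoZ_2_6} and~\cite[Lemma~II.12]{CoZ_Part1_IT_18} are all phrased in a field- (respectively ring-) agnostic way, which they are.
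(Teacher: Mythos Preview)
Your proposal is correct and matches the paper's approach exactly: the paper simply remarks that ``the ideas used in the proof of Lemma~\ref{lem:solution_condition_semisimple} hold for the following generalization as well,'' and your plan spells out precisely those ideas (Theorem~\ref{thm:CoZ_2_6} applied to $\Fb_q[G]/\ann(\Cs)\cong\prod_{k\in T(\Cs)}\Fb_{q_k}$, then~\cite[Lemma~II.12]{CoZ_Part1_IT_18}), verifying along the way that nothing is specific to $q=2$.
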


Without loss of generality let the first conjugacy class $C_{g_1}$ be the class generated by the identity element of $G$. Then, $C_{g_1} = C_e = \{e\}$, and hence, $\Fb_{q_1} = \Fb_q$.
Also, $q_k$ is a power of $q$ for each $k=2,\dots,t$, and hence, each $\Fb_{q_k}$ is an extension field of $\Fb_q$. Hence, there exists a ring homomorphism from $\Fb_q$ to $\Fb_{q_k}$, $k=2,\dots,t$.

If $\Cs$ is such that $1 \in T(\Cs)$, then a network code has a solution over $\Cs$ if and only if it has a solution over $\Fb_q$. The `only if' part follows from Lemma~\ref{lem:general_solution_condition_semisimple}. The `if' part follows from Lemma~\ref{lem:general_solution_condition_semisimple} and Theorem~\ref{thm:CoZ_1_6} using the observation that there is a ring homomorphism from $\Fb_q$ to $\Fb_{q_k}$ for all $k=2,\dots,t$.

Now suppose that $\Cs$ is such that $1 \notin T(\Cs)$. Then for any $m = \sum_{g \in G}m_g g \in \Cs$ and $\dft(m) = (\hat{m}_1,\dots,\hat{m}_t)$ we have $\hat{m}_1=0$. Since $\hat{m}_1 = \sum_{g \in G}m_g$, we conclude that for all $m \in \Cs$ we have $\sum_{g \in G}m_g = 0$. 
It is clear that for any choice of $\beta \in \Fb_q$ and any $m \in \Cs$,
\begin{align*}
\left( \sum_{g \in G} \beta g \right) \cdot \left( \sum_{g \in G} m_g g \right) &= \sum_{g \in G} \left(\sum_{h \in G} \beta m_{h^{-1}g}  \right) g \\
&= \beta \sum_{g \in G} \left(\sum_{k \in G}  m_k  \right) g\\ &= 0.
\end{align*} 

Thus, $\ann(\Cs)$ contains the set $\left\{ \sum_{g \in G}\beta g~|~\beta \in \Fb_q \right\}$, and hence, $\taunat(\ann(\Cs))$ contains the repetition code over $\Fb_q$. Therefore, $\rcov(\ann(\Cs))$ is upper bounded by the covering radius of the repetition code, which is $\lfloor n(q-1)/q\rfloor$.
We end this section with the following corollary to Lemma~\ref{lem:general_solution_condition_semisimple}.

\begin{corollary}
Let $n$ be a prime with primitive root $q$. A network has a circular-shift network coding solution over $\Fb_q$ with degree $\lfloor n(q-1)/q \rfloor$ and rate $(n-1)/n$ if it has a scalar linear solution over $\Fb_{q^{n-1}}$.
\end{corollary}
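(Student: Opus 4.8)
The plan is to mimic the proof of Lemma~\ref{lem:main_n_odd}, specifically its specialization to $n$ prime with primitive root, carried out over $\Fb_q$ instead of $\Fb_2$: Lemma~\ref{lem:general_solution_condition_semisimple} replaces Lemma~\ref{lem:solution_condition_semisimple}, and the $\Fb_q$-analogue of Theorem~\ref{thm:degree} (stated in the paragraph preceding this subsection) controls the degree.

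First I would pin down the structure of $\Fb_q[G]$ for $G = \{e,y,\dots,y^{n-1}\}$, $y^n=e$. Since $q$ is a primitive root modulo the prime $n$, the multiplicative order of $q$ in $(\Zb/n\Zb)^\times$ is $n-1$; hence the conjugacy class $C_y=\{y,y^q,y^{q^2},\dots\}$ has size $n-1$, and because $\{q^0,q^1,\dots,q^{n-2}\} \bmod n = \{1,\dots,n-1\}$ it equals $\{y,y^2,\dots,y^{n-1}\}$. The identity $C_e=\{e\}$ is the only other conjugacy class, so $t=2$ and $\Fb_q[G] \cong \Fb_{q_1}\times\Fb_{q_2}$ with $q_1=q$, $q_2=q^{n-1}$. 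Next I would take $\Cs$ to be the ideal with $T(\Cs)=\{2\}$, i.e.\ $\dft(\Cs)=\{0\}\times\Fb_{q^{n-1}}$. Then $\dim_{\Fb_q}\Cs=n-1$, so any $\Fb_q[G]$-linear code over $\Cs$ realized as in Section~\ref{sec:arbit_finite_fields} is a circular-shift code (the $\rhoreg_g$ are cyclic permutation matrices since $G$ is cyclic) of rate $(n-1)/n$. By Lemma~\ref{lem:general_solution_condition_semisimple}, the network has an $\Fb_q[G]$-linear solution over $\Cs$ if and only if it has a scalar linear solution over $\Fb_{q_2}=\Fb_{q^{n-1}}$, so the hypothesis supplies such a solution.

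It then remains to compute the degree, which by the $\Fb_q$-version of Theorem~\ref{thm:degree} equals $\rcov(\ann(\Cs))$. By the analogue of~\eqref{eq:R_by_annJ} over $\Fb_q$ we have $\ann(\Cs)=\dft^{-1}(\Fb_q\times\{0\})$, an $\Fb_q$-space of dimension $1$; on the other hand the computation displayed just above this corollary shows $\sum_{g\in G}\beta g \in \ann(\Cs)$ for every $\beta\in\Fb_q$, so a cardinality count gives $\ann(\Cs)=\{\sum_{g\in G}\beta g \mid \beta\in\Fb_q\}$ and hence $\taunat(\ann(\Cs))$ is \emph{exactly} the length-$n$ repetition code over $\Fb_q$. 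Its covering radius is $n$ minus the largest multiplicity forced by the pigeonhole principle on a coordinate value of an arbitrary vector of $\Fb_q^n$, namely $n-\lceil n/q\rceil=\lfloor n(q-1)/q\rfloor$, and this value is attained. Therefore the derived circular-shift code has degree $\lfloor n(q-1)/q\rfloor$, completing the proof.

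I do not expect any genuine obstacle: every step is a routine transfer of material already in the paper from $\Fb_2$ to $\Fb_q$. The only point requiring a little care is that the statement asks for the degree to \emph{equal} $\lfloor n(q-1)/q\rfloor$ rather than merely be bounded by it (contrast Corollary~\ref{cor:even_weight}), so one must verify $\taunat(\ann(\Cs))$ equals—not just contains—the repetition code (the cardinality count above) and recall that the repetition code over $\Fb_q$ attains this covering radius with equality.
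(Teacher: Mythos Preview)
Your proposal is correct and follows essentially the same route as the paper's proof: identify the two conjugacy classes of the cyclic group of prime order $n$ to get $\Fb_q[G]\cong\Fb_q\times\Fb_{q^{n-1}}$, take $\Cs=\dft^{-1}(\{0\}\times\Fb_{q^{n-1}})$, invoke Lemma~\ref{lem:general_solution_condition_semisimple} for existence, and observe that $\taunat(\ann(\Cs))$ is the repetition code with covering radius $\lfloor n(q-1)/q\rfloor$. Your extra care in arguing that $\taunat(\ann(\Cs))$ \emph{equals} (not merely contains) the repetition code via the dimension count, and in justifying the covering-radius value by pigeonhole, is more explicit than the paper but not a different method.
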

\begin{proof}
Since $n$ is a prime with primitive root $q$, the integers $1,\dots,n-1$ are powers of $q$ modulo $n$. Now let $G=\left\{e,y,y^2,\dots,y^{n-1}\right\}$ be the cyclic group of order $n$. Then $G$ has exactly two conjugacy classes, viz. $C_e=\{e\}$ and $C_y=\left\{y,y^2,\dots,y^{n-1}\right\}$. Hence, $\Fb_q[G] \cong \Fb_q \times \Fb_{q^{n-1}}$. Here $t=2$, $q_1=q$ and $q_2=q^{n-1}$. 

Now consider the ideal $\Cs$ with $T(\Cs)=\{2\}$, i.e., $\Cs = \dft^{-1}\left(\{0\} \times \Fb_{q^{n-1}}\right)$. 
The annihilator of $\Cs$ is $\dft^{-1}\left(\Fb_q \times \{0\} \right)$, and the $1$-dimensional subspace $\taunat(\ann(\Cs))$ is the repetition code over $\Fb_q$ with covering radius $\left\lfloor n(q-1)/q \right\rfloor$.
From Lemma~\ref{lem:general_solution_condition_semisimple} a network coding solution over $\Cs$ exists if and only if a scalar linear solution over $\Fb_{q^{n-1}}$ exists. Since $\dim(\Cs)=n-1$ over $\Fb_q$, the rate of the network code is $(n-1)/n$. Finally, since $G$ is a cyclic group this is a circular-shift network code over $\Fb_q$.
\end{proof}

\section{Conclusion \& Discussion} \label{sec:conclusion}

We identified an algebraic technique to design permute-and-add network codes by using the network coding framework of Connelly and Zeger and the matrix representation of group algebras. 
The natural ring theoretic flavour of our approach allowed us to obtain new results (such as Theorem~\ref{thm:degree}, Corollary~\ref{cor:even_weight}, Lemmas~\ref{lem:solution_condition_semisimple} and~\ref{lem:simplex_codes}), and also generalize and strengthen some results known in the literature (Corollary~\ref{cor:existence_over_F2} and Lemma~\ref{lem:main_n_odd}). 
Our techniques also apply to non-cyclic Abelian groups of permutations, which yield network codes with a wider range of achievable rate and degree compared to circular-shift network codes.

The majority of our results on the existence of permute-and-add network coding solutions are for the case where the characteristic of the field does not divide the order of the group. 
This includes the case where the ideals in the group algebra correspond to BCH codes.
It might be interesting to consider the alternative. 
For instance, can we determine the existence of permute-and-add network coding solutions over $\Fb_2$ when the additive group of $\Fb_2^m$, for some integer $m$, is used as the group $G$.
In this case it is known that Reed-Muller codes $\mathcal{RM}(r,m)$ exist as ideals in the group algebra $\Fb_2[G]$, see~\cite{Ber_Cybernetics_67}; and perhaps such structural properties will be helpful in solving this problem.

In a recent work~\cite{ShH_ISIT20}, Shum and Hou designed a code for distributed storage over $\Zb_{256}$ using an ideal in the ring $\Zb_{256}/(x^5-1)$, see Example~\ref{example:Shum_Hou}. The coding operations involved in this design are byte-wise circular shifts and integer addition modulo $256$. 
Similar to our work presented in the current paper, it will be interesting to identify an algebraic approach that can generalize this construction and relate the existence of such coding solutions to the existence of appropriate scalar linear solutions.

\appendices

\section{Proof of Lemma 1} \label{app:lem:annihilator}
\begin{proof}
We will use induction on the topologically ordered list of vertices in the network to show that if the original network code is replaced with the modified code, then the value of the symbols carried on the network edges do not change. 
This statement is clearly true for the outgoing edges of all the source nodes, since these symbols are not coded.

For each edge $e$, let the value of the symbol carried on this edge under the original and modified codes be $X_e$ and $X_e'$, respectively.
To complete the induction, consider a non-source node $v$. For each $d \in \rmin(v)$, assume $X_d=X'_d$. Then, for each $e \in \rmout(v)$, we have
\begin{align*}
\tau^{-1}\left(X_e'\right) &=  \sum_{d \in \rmin(v)} \left(k^{d,e} + a^{d,e}\right) \, \tau^{-1}(X_d) \\
&=  \sum_{d \in \rmin(v)} k^{d,e}\tau^{-1}(X_d) + a^{d,e}\tau^{-1}(X_d) \\
&= \sum_{d \in \rmin(v)} k^{d,e}\tau^{-1}(X_d) = \tau^{-1}(X_e),
\end{align*} 
and hence, $X_e' = X_e$. 

Finally, we observe that the decoding function at a node $v$ that demands a message $Z_i$ in the modified code is
\begin{align*}
\sum_{d \in \rmin(v)} &\left(k^{d,i} + a^{d,i}\right) \tau^{-1}(X_d)\\
&=\sum_{d \in \rmin(v)} k^{d,i} \, \tau^{-1}(X_d) + a^{d,i}\tau^{-1}(X_d) \\
&=\sum_{d \in \rmin(v)} k^{d,i}\, \tau^{-1}(X_d)\\ 
&= Z_i.
\end{align*} 
Hence, the sinks can decode their demands in the modified network code.
\end{proof}

\section{Degree of the Network Code in Example~\ref{ex:n_15}} \label{app:ex:n_15}

The annihilator of this code is $\ann(\Cs) = \Fb_2 \times \{0\} \times \{0\} \times \{0\} \times \{\Fb_4\}$. This is a cyclic code whose parity-check polynomial $h(x)$ is the product of the minimal polynomials of $\omega^0$ and $\omega^5$, where $\omega$ is the primitive $5^\tth$ root of unity~\cite{macwilliams1977theory}. Thus 
\begin{equation*}
h(x) = \left(x-1\right) \left(x-\omega^5\right) \left(x-\omega^{10}\right) = \left(x+1\right)\left(x^2+x+1\right).
\end{equation*} 
\enlargethispage{-2.4in}
Hence, the generator polynomial of $\ann(\Cs)$ is $g(x) = \left(x^{15}+1\right)/h(x) = x^{12}+x^{9}+x^6 + x^3 + 1$. This code has generator matrix
\begin{equation*}
 \begin{bmatrix} \pmb{I}_3 & \pmb{I}_3 & \pmb{I}_3 & \pmb{I}_3 & \pmb{I}_3 \end{bmatrix}.
\end{equation*} 
Up to coordinate permutations this code is equivalent to a direct sum of three length-$5$ repetition codes. Applying~\cite[Theorem~3.2.1]{cohen1997covering} we deduce that the covering radius of $\ann(\Cs)$ is equal to $3$ times the covering radius of the length-$5$ repetition code. Thus, the degree of the network code $\Cs$ is $\rcov(\ann(\Cs))=6$.
\bibliographystyle{IEEEtran}

\vspace*{-16\baselineskip}
\begin{IEEEbiographynophoto}{Lakshmi Prasad  Natarajan}
is an Assistant Professor in the Department of Electrical Engineering, Indian Institute of Technology Hyderabad. He received the B.E.\ degree from the College of Engineering, Guindy, in electronics and communication in 2008, and the Ph.D. degree from the Indian Institute of Science, Bangalore, in 2013. Between 2014 and 2016 he held a post-doctoral position at the Department of Electrical and Computer Systems Engineering, Monash University, Australia. His primary research interests are coding techniques and information theory for communication systems.
\end{IEEEbiographynophoto}
\vspace*{-16\baselineskip}
\begin{IEEEbiographynophoto}{Smiju Kodamthuruthil Joy}
 received the AMIE degree in electronics and communication from Institution of Engineers, Kolkata, in 2007 and the M. Tech. degree from National Institute of Technology, Rourkela, in 2010. Currently he is a Ph.D. student in the Department of Electrical Engineering, Indian Institute of Technology Hyderabad. His research interests include index coding and network coding.
 \end{IEEEbiographynophoto}
\end{document}